\newtheorem{theorem}{Theorem}
\newtheorem{lemma}[theorem]{Lemma}
\newtheorem{corollary}[theorem]{Corollary}
\newenvironment{proof}[1][Proof]{\textbf{#1.} }{\ \rule{0.5em}{0.5em}}
\newtheorem{observation}{Observation}
\DeclareMathOperator*{\expl}{expl}
\newcommand{\defeq}{\vcentcolon=}
\def\etal{\textit{et al.}}
\begin{document}

\title{Revisiting CFR$^+$ and Alternating Updates}

\author{\name Neil Burch \email burchn@google.com \\
  \name Matej Moravcik \email moravcik@google.com \\
  \name Martin Schmid \email mschmid@google.com \\
  \addr DeepMind, 5 New Street Square, \\
  London, EC4A 3TW}

\maketitle

\begin{abstract}
  The CFR$^+$ algorithm for solving imperfect information games is a
  variant of the popular CFR algorithm, with faster empirical
  performance on a range of problems. It was introduced with a
  theoretical upper bound on solution error, but subsequent work
  showed an error in one step of the proof. We provide updated proofs
  to recover the original bound.
\end{abstract}

\section{Introduction}
\label{sec:introduction}

CFR$^+$ was introduced~\cite{Tammelin14} as an algorithm for
approximately solving imperfect information games, and was
subsequently used to essentially solve the game of heads-up limit
Texas Hold'em poker~\cite{SolvingHulhe}. Another paper associated with
the poker result gives a correctness proof for CFR$^+$, showing that
approximation error approaches zero~\cite{15ijcai-cfrplus}.

CFR$^+$ is a variant of the CFR algorithm~\cite{07nips-cfr}, with much
better empirical performance than CFR. One of the CFR$^+$ changes is
switching from simultaneous updates to alternately updating a single
player at a time. A crucial step in proving the correctness of both
CFR and CFR$^+$ is linking regret, a hindsight measurement of
performance, to exploitability, a measurement of the solution quality.

Later work pointed out a problem with the CFR$^+$
proof~\cite{Farina18OnlineConvexOptimization}, noting that the CFR$^+$
proof makes reference to a folk theorem making the necessary link
between regret and exploitability, but fails to satisfy the theorem's
requirements due to the use of alternating updates in
CFR$^+$. Farina~\etal{} give an example of a sequence of updates which
lead to zero regret for both players, but high exploitability.

We state a version of the folk theorem that links alternating update
regret and exploitability, with an additional term in the
exploitability bound relating to strategy improvement. By proving that
CFR and CFR$^+$ generate improved strategies, we can give a new
correctness proof for CFR$^+$, recovering the original bound on
approximation error.

\section{Definitions}

We need a fairly large collection of definitions to get to the
correctness proof. CFR and CFR$^+$ make use of the regret-matching
algorithm~\cite{hart2000simple} and regret-matching$^+$
algorithm~\cite{Tammelin14}, respectively, and we need to show some
properties of these component algorithms. Both CFR and CFR$^+$ operate
on extensive form games, a compact tree-based formalism for describing
an imperfect information sequential decision making problem.

\subsection{Regret-Matching and Regret-Matching$^+$}

Regret-matching is an algorithm for solving the online regret
minimisation problem.  External regret is a hindsight measurement of
how well a policy did, compared to always selecting some action. Given
a set of possible actions $A$, a sequence of value functions
$\boldsymbol{v}^t \in \mathbb{R}^{|A|}$, and sequence of policies
$\boldsymbol{\sigma}^t \in \Delta^{|A|}$, the regret for an action is
\begin{align}
  \label{eq:action-regret}
  \boldsymbol{r}^{t+1} & \defeq \boldsymbol{r}^t + \boldsymbol{v}^t - \boldsymbol{\sigma}^t\cdot\boldsymbol{v}^t & \nonumber \\
  \boldsymbol{r}^0 & \defeq \boldsymbol{0} &
\end{align}
An online regret minimisation algorithm specifies a policy
$\boldsymbol{\sigma}^t$ based on past value functions and policies,
such that $\max_a r^t_a/t\rightarrow{}0$ as $t\rightarrow\infty$.

Let $x^+\defeq\max{(x,0)}$,
$\boldsymbol{x}^+\defeq\left[x_1^+,...,x_n^+\right]$, and
\begin{align}
  \label{eq:rm-normalise}
  \boldsymbol{\sigma}_{\mbox{rm}}(\boldsymbol{x}) & \defeq \left\{ \begin{array}{ll}
    \boldsymbol{x}^+ / (\boldsymbol{1}\cdot\boldsymbol{x}^+) & \mbox{if } \exists a \mbox{ s.t. } x_a > 0 \\
    \boldsymbol{1} / |A| & \mbox{otherwise}
  \end{array} \right.
\end{align}
Then for any $t\ge{}0$, regret-matching uses a policy
\begin{align}
  \label{eq:regret-matching}
  \boldsymbol{\sigma}^t & \defeq \boldsymbol{\sigma}_{\mbox{rm}}(\boldsymbol{r}^t)
\end{align}

Regret-matching$^+$ is a variant of regret-matching that stores a set
of non-negative regret-like values
\begin{align}
  \boldsymbol{q}^{t+1} & \defeq (\boldsymbol{q}^t + \boldsymbol{v}^t - \boldsymbol{\sigma}^t\cdot\boldsymbol{v}^t)^+ & \nonumber \\
  \boldsymbol{q}^0 & \defeq \boldsymbol{0} & \label{eq:action-regret-plus}
\end{align}
and uses the same regret-matching mapping from stored values to policy
\begin{align}
  \label{eq:regret-matching-plus}
  \boldsymbol{\sigma}^t & \defeq \boldsymbol{\sigma}_{\mbox{rm}}(\boldsymbol{q}^t)
\end{align}

\subsection{Extensive Form Games}
An extensive form game~\cite{von1947theory} is a sequential
decision-making problem where players have imperfect (asymmetric)
information. The formal description of an extensive form game is given
by a tuple
$\langle{}H,P,p,\boldsymbol{\sigma}_c,u,\mathcal{I}\rangle{}$.

$H$ is the set of all states $h$, which are a history of actions from
the beginning of the game $\emptyset$.  Given a history $h$ and an
action $a$, $ha$ is the new state reached by taking action $a$ at
$h$. To denote a descendant relationship, we say $h \sqsubseteq j$ if
$j$ can be reached by some (possibly empty) sequence of actions from
$h$, and $h\sqsubset{}j\iff{}h\sqsubseteq{}j,h\ne{}j$.

We will use
$Z\defeq\{h\in{}H\mid\nexists{}j\in{}H\mbox{~s.t.~}h\sqsubset{}j\}$ to
denote the set of terminal histories, where the game is over. We will
use $Z(h)\defeq\{z\in{}Z\mid{}h\sqsubseteq z\}$ to refer to the set of
terminal histories that can be reached from some state $h$.

$A(h)$ gives the set of valid actions at $h\in{}H\setminus{}Z$. We
assume some fixed ordering $a_1,a_2,...,a_{|A|}$ of the actions, so we
can speak about a vector of values or probabilities across actions. $a\prec{}b$ denotes that action $a$ precedes $b$, with $a\prec{}b\iff{}a_i=a,a_j=b,i<j$.

$P$ is the set of players, and
$p:H\setminus{}Z\to{}P\bigcup{}\{c\}$ gives the player that
is acting at state $h$, or the special chance player $c$ for states
where a chance event occurs according to probabilities specified by
$\boldsymbol{\sigma}_c(h)\in\Delta^{|A(h)|}$. Our work is restricted
to two player games, so will say $P=\{1,2\}$.

The utility of a terminal history $z$ for Player $p$ is given by
$u_p(z)$. We will restrict ourselves to zero-sum games, where
$\sum_{p\in{}P}u_p(z)=0$.

A player's imperfect information about the game state is represented
by a partition $\mathcal{I}$ of states $H$ based on player
knowledge. For all information sets $I\in\mathcal{I}$ and all states
$h,j\in{}I$ are indistinguishable to Player $p(h)=p(j)$, with the same
legal actions $A(h)=A(j)$. Given this equality, we can reasonably talk
about $p(I)\defeq{}p(h)$ and $A(I)\defeq{}A(h)$ for any $h\in{}I$. For
any $h$, we will use $I(h)\defeq{}I\in\mathcal{I}$ such that $h\in{}I$
to refer to the information set containing $h$. It is convenient to
group information sets by the acting player, so we will use
$\mathcal{I}_p\defeq\{I\in\mathcal{I}\mid{}p(I)=p\}$ to refer to
Player $p$'s information sets.

We will also restrict ourselves to extensive form games where players
have perfect recall. Informally, Player $p$ has perfect recall if
they do not forget anything they once knew: for all states $h,j$ in
some information set, both $h$ and $j$ passed through the same
sequence of Player $p$ information sets from the beginning of the game
$\emptyset$, and made the same Player $p$ actions.

A strategy $\sigma_p:\mathcal{I}_p\to\Delta^{|A(I)|}$ for Player $p$
gives a probability distribution $\boldsymbol{\sigma}_p(I)$ over legal
actions for Player $p$ information sets. For convenience, let
$\boldsymbol{\sigma}_p(h)\defeq{}\boldsymbol{\sigma}_p(I(h))$. A
strategy profile $\boldsymbol{\sigma}\defeq{}(\sigma_1,\sigma_2)$ is a
tuple of strategies for both players.  Given a profile
$\boldsymbol{\sigma}$, we will use $\sigma_{-p}$ to refer to the
strategy of $p$'s opponent.

Because states are sequences of actions, we frequently need to refer
to various products of strategy action probabilities. Given a strategy
profile $\boldsymbol{\sigma}$,
\begin{align}
  \pi^{\boldsymbol{\sigma}}(h) & \defeq \prod_{ia \sqsubseteq h} \sigma_{p(i)}(h)_a \label{eq:pi-h}
\end{align}
refers to the probability of a game reaching state $h$ when players
sample actions according to $\boldsymbol{\sigma}$ and chance events
occur according to
$\sigma_c$.
\begin{align}
  \pi^{\boldsymbol{\sigma}}(h\mid{}j) & \defeq \prod_{\stackrel{ia \sqsubseteq h}{j \sqsubseteq i}} \sigma_{p(i)}(h)_a \label{eq:pi-h-j}
\end{align}
refers to the probability of a game reaching $h$ given that $j$ was
reached.
\begin{align}
  \pi^{\boldsymbol{\sigma}}_p(h) & \defeq \prod_{\stackrel{ia \sqsubseteq h}{p(h) = p}} \sigma_{p(i)}(h)_a \nonumber \\
  \pi^{\boldsymbol{\sigma}}_{-p}(h) & \defeq \prod_{\stackrel{ia \sqsubseteq h}{p(h) \ne p}} \sigma_{p(i)}(h)_a \label{eq:pi-p-h}
\end{align}
refer to probabilities of Player $p$ or all actors but $p$ making the
actions to reach $h$, given that $p$'s opponent and chance made the
actions in $h$. Note that there is a slight difference in the meaning
of the label $_{-p}$ here, with $\pi^{\boldsymbol{\sigma}}_{-p}$
considering actions by both Player $p$'s opponent and chance, whereas
$\sigma_{-p}$ refers to the strategy of $p$'s opponent.
\begin{align}
  \pi^{\boldsymbol{\sigma}}_p(h\mid{}j) & \defeq \prod_{\stackrel{\stackrel{ia \sqsubseteq h}{j \sqsubseteq i}}{p(h) = p}} \sigma_{p(i)}(h)_a \label{eq:pi-p-h-j}
\end{align}
refers to the probability of Player $p$ making the actions to reach
$h$, given $j$ was reached and $p$'s opponent and chance make the
actions to reach $h$. There are a few useful
relationships:
\begin{align}
  & \pi^{\boldsymbol{\sigma}}(h) = \pi^{\boldsymbol{\sigma}}_p(h)\pi^{\boldsymbol{\sigma}}_{-p}(h) & \nonumber \\
  & \forall j \sqsubseteq h,~ \pi^{\boldsymbol{\sigma}}(h) = \pi^{\boldsymbol{\sigma}}(j)\pi^{\boldsymbol{\sigma}}(h\mid{}j) &
\end{align}

The expected utility of a strategy profile $\boldsymbol{\sigma}$ is
\begin{align}
  & u^{\boldsymbol{\sigma}}_p \defeq \sum_{z \in Z} \pi^{\boldsymbol{\sigma}}(z) u_p(z) &
\end{align}
The counterfactual value of a history or information set are defined as
\begin{align}
  \label{eq:cfv}
  & v^{\boldsymbol{\sigma}}_p(h) \defeq \sum_{z \in Z(h)} \pi^{\boldsymbol{\sigma}}_{-p}(z)\pi^{\boldsymbol{\sigma}}_p(z\mid{}h) u_p(z) & \nonumber \\
  & \boldsymbol{v}^{\boldsymbol{\sigma}}(I) \defeq \sum_{h \in I} (v^{\boldsymbol{\sigma}}_{p(h)}(ha_1), ...,  v^{\boldsymbol{\sigma}}_{p(h)}(ha_{|A(I)|})) &
\end{align}
For later convenience, we will assume that for each player there
exists an information set $I^\emptyset_p$ at the beginning of the
game, containing a single state with a single action, leading to the
rest of the game. This lets us say that
$u^{\boldsymbol{\sigma}}_p=v^{\boldsymbol{\sigma}}(I^\emptyset_p)_{a_0}$.

Given a sequence $\sigma^0_p, ..., \sigma^t_p$ of strategies, we
denote the average strategy from $a$ to $b$ as
\begin{align}
  \bar{\sigma}^{[a,b]}_p \defeq \sum_{i=a}^b \frac{\sigma^i_p}{b-a+1}
\end{align}
Given a sequence $\boldsymbol{\sigma}^0, ..., \boldsymbol{\sigma}^{t-1}$
of strategy profiles, we denote the average Player $p$ regret as
\begin{align}
  r^t_p & \defeq \max_{\sigma^*_p}\sum_{i=0}^{t-1} (u^{(\sigma^*_p,\sigma^i_{-p})}_p - u^{\boldsymbol{\sigma}^i}_p) / t \nonumber \\
  & = \max_{\sigma^*_p} u^{(\sigma^*_p,\bar{\sigma}^{[0,t-1]}_{-p})}_p - \sum_{i=0}^{t-1} u^{\boldsymbol{\sigma}^i}_p / t \label{eq:avg-regret-strategy}
\end{align}
The exploitability of a strategy profile $\boldsymbol{\sigma}$ is a
measurement of how much expected utility each player could gain by
switching their strategy:
\begin{align}
  \expl(\boldsymbol{\sigma}) & \defeq \max_{\sigma^*_1}u^{(\sigma^*_1,\sigma_2)}_1 - u^{\boldsymbol{\sigma}}_1 + \max_{\sigma^*_2}u^{(\sigma_1,\sigma^*_2)}_2 - u^{\boldsymbol{\sigma}}_2 \nonumber \\
  & = \max_{\sigma^*_1}u^{(\sigma^*_1,\sigma_2)}_1 + \max_{\sigma^*_2}u^{(\sigma_1,\sigma^*_2)}_2 & \mbox{by zero-sum} \label{eq:exploitability} 
\end{align}
Achieving zero exploitability -- a Nash equilibrium~\cite{Nash1950} --
is possible. In two player, zero-sum games, finding a strategy with
low exploitability is a reasonable goal for good play.

\subsection{CFR and CFR$^+$}
CFR and its variant CFR$^+$ are both
algorithms for finding an extensive form game strategy with low
exploitability. They are all iterative self-play algorithms that track
the average of a current strategy that is based on many loosely
coupled regret minimisation problems.

CFR and CFR$^+$ track regret-matching values $\boldsymbol{r}^t(I)$ or
regret-matching$^+$ values $\boldsymbol{q}^t(I)$ respectively, for all
$I\in\mathcal{I}$. At time $t$, CFR and CFR$^+$ use strategy profile
$\boldsymbol{\sigma}^t(I)\defeq\boldsymbol{\sigma}_{\mbox{rm}}(\boldsymbol{r}^t(I))$
and
$\boldsymbol{\sigma}^t(I)\defeq\boldsymbol{\sigma}_{\mbox{rm}}(\boldsymbol{q}^t(I))$,
respectively.  When doing alternating updates, with the first update
done by Player 1, the values used for updating regrets are
\begin{align}
  \label{eq:cfr-value}
  & \boldsymbol{v}^t(I) \defeq \left\{ \begin{array}{ll}
    \boldsymbol{v}^{\boldsymbol{\sigma}^t}(I) & \mbox{if $p(I)=1$} \\
    \boldsymbol{v}^{(\sigma^{t+1}_1,\sigma^t_2)}(I) & \mbox{if $p(I)=2$}
  \end{array} \right.
\end{align}
and the output of CFR is the profile of average strategies
$(\bar{\sigma}^{[1,t]}_1,\bar{\sigma}^{[0,t-1]}_2)$, while the output
of CFR$^+$ is the profile of weighted average strategies
$(\frac{2}{t^2+t}\sum_{i=1}^{t}i\sigma^i_1,\frac{2}{t^2+t}\sum_{i=0}^{t-1}(i+1)\sigma^i_2)$.

\section{Theoretical Results}
\label{sec:proof}

The CFR$^+$ proof of correctness~\cite{15ijcai-cfrplus} references a
folk theorem that links regret and exploitability. Farina~\etal{} show
that the folk theorem only applies to simultaneous updates, not
alternating updates, giving an example of a sequence of alternating
updates with no regret but constant
exploitability~\cite{Farina18OnlineConvexOptimization}. Their
observation is reproduced below using the definitions from this work.

\begin{observation}
  \label{obs:alternating-counterexample}
  Let $P=\{X,Y\}$, $A=\{0,1\}$, and $Z=\{00, 01, 10, 11\}$. A game
  consists of each player selecting one action. Let $u_X(11)=1$, and
  $u_X(z)=0$ for all $z\ne{}11$. Consider the sequence of strategies
  $\sigma^t_X=\sigma^t_Y=t\bmod{}2$, with Player $X$ regrets computed
  using $\boldsymbol{v}^{(\sigma_X^t,\sigma_Y^t)}$ and Player $Y$
  regrets computed using
  $\boldsymbol{v}^{(\sigma_X^{t+1},\sigma_Y^t)}$. Then at any time
  $2T$ the accumulated regret for both players is 0 and the average
  strategy is
  $\bar{\sigma}^{[1,2T]}_X=\bar{\sigma}^{[0,2T-1]}_Y=\boldsymbol{0.5}$,
  with exploitability
  $\expl(\bar{\sigma}^{[1,2T]}_X,\bar{\sigma}^{[0,2T-1]}_Y)=0.5$. So
  both players have 0 regret, but the exploitability does not approach
  0.
\end{observation}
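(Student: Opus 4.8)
The statement is verified by direct computation, so the plan is to (i) identify the strategy sequence, (ii) read off the two average strategies, (iii) compute the exploitability of the resulting profile from best responses, and (iv) track the regret updates \eqref{eq:action-regret}/\eqref{eq:action-regret-plus} at each player's information set and check the accumulated vector is coordinatewise nonpositive. For (i), $\sigma^t_X$ and $\sigma^t_Y$ are the point masses on action $t\bmod 2$. For (ii), among the indices $1,\dots,2T$ exactly $T$ are odd and $T$ are even, and the same holds for $0,\dots,2T-1$, so averaging point masses gives $\bar\sigma^{[1,2T]}_X=\bar\sigma^{[0,2T-1]}_Y=\boldsymbol{0.5}$. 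For (iii), under $(\boldsymbol{0.5},\boldsymbol{0.5})$ the only terminal with nonzero utility, $11$, is reached with probability $\tfrac{1}{4}$; Player $X$'s best response to $\boldsymbol{0.5}$ plays action $1$, reaching $11$ with probability $\tfrac{1}{2}$, so $\max_{\sigma^*_X}u^{(\sigma^*_X,\boldsymbol{0.5})}_X=\tfrac{1}{2}$, whereas Player $Y$'s best response plays action $0$, reaching $11$ with probability $0$, so $\max_{\sigma^*_Y}u^{(\boldsymbol{0.5},\sigma^*_Y)}_Y=0$; by \eqref{eq:exploitability} the exploitability is $\tfrac{1}{2}+0=\tfrac{1}{2}$ for every $T$.

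Step (iv) is the heart of the bookkeeping. The game has a single nontrivial information set per player (or two for $Y$ if $Y$ sees $X$'s move, which changes nothing below), and only the branch where both players pick action $1$ carries nonzero utility. Hence the counterfactual action-values at Player $X$'s set are $\boldsymbol{v}^{(\sigma^t_X,\sigma^t_Y)}=(0,\ \sigma^t_Y(1))=(0,\ t\bmod 2)$ with $\boldsymbol{\sigma}^t_X\cdot\boldsymbol{v}^t=t\bmod 2$, so the increment added to $\boldsymbol{r}^t$ (equivalently, added inside the $(\cdot)^+$ of \eqref{eq:action-regret-plus} for $\boldsymbol{q}^t$) is $(-(t\bmod 2),\,0)$: it is $(-1,0)$ at odd $t$ and $\boldsymbol{0}$ at even $t$. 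Player $Y$'s value function uses $\sigma^{t+1}_X$, so the action-values are $\boldsymbol{v}^{(\sigma^{t+1}_X,\sigma^t_Y)}=(0,\ -\sigma^{t+1}_X(1))=(0,\ -((t{+}1)\bmod 2))$, with inner product $0$ against the point mass $\sigma^t_Y$, so the increment is $(0,\ -((t{+}1)\bmod 2))$: it is $(0,-1)$ at even $t$ and $\boldsymbol{0}$ at odd $t$. Summing over $t=0,\dots,2T-1$ (again $T$ odd and $T$ even indices) gives accumulated regret $(-T,0)$ at $X$'s set and $(0,-T)$ at $Y$'s set --- or $\boldsymbol{0}$ in both cases under the clamped $\boldsymbol{q}$ recurrence --- so $\max_a r^{2T}(I)_a=0$ at each set, and the total accumulated regret tracked by CFR/CFR$^+$ is $0$ for both players while the output profile $(\bar\sigma^{[1,2T]}_X,\bar\sigma^{[0,2T-1]}_Y)=(\boldsymbol{0.5},\boldsymbol{0.5})$ has exploitability $\tfrac{1}{2}$.

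I anticipate no real computational difficulty; the one thing to be careful about --- and precisely what makes this a genuine counterexample --- is that ``regret is $0$'' here refers to the counterfactual regret the algorithm accumulates and that CFR's correctness argument bounds, \emph{not} the true game regret of \eqref{eq:avg-regret-strategy}. Because Player $Y$'s updates are computed from $\sigma^{t+1}_X$ rather than $\sigma^t_X$, the tracked quantity misses the loss $Y$ actually incurs against the average opponent, and in fact $r^{2T}_Y=\max_{\sigma^*_Y}u^{(\bar\sigma^{[0,2T-1]}_X,\sigma^*_Y)}_Y-\sum_{i=0}^{2T-1}u^{\boldsymbol{\sigma}^i}_Y/(2T)=0-(-\tfrac{1}{2})=\tfrac{1}{2}>0$; so it is the step ``bounded counterfactual regret $\Rightarrow$ bounded true regret'' that alternating updates break. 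Keeping the $\pi_{-p}$-weighting in the counterfactual values straight, and holding these two notions of regret apart, is essentially all the argument requires.
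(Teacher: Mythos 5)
Your proposal is correct: the paper states this observation without proof (attributing it to Farina et al.), so there is no in-paper argument to compare against, and your direct computation of the average strategies, the best responses, and the per-step regret increments $(-(t\bmod 2),\,0)$ for $X$ and $(0,\,-((t+1)\bmod 2))$ for $Y$ all check out. Your closing remark also correctly isolates what makes this a counterexample --- the regret the algorithm accumulates is computed against the shifted sequence $(\sigma^{t+1}_X,\sigma^t_Y)$ and is zero, while Player $Y$'s regret against the unshifted sequence is $\tfrac{1}{2}$, which is exactly the trailing sum that Theorem~\ref{thm:folk-theorem} later introduces to repair the bound.
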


As a first step in correcting the CFR$^+$ proof, we introduce an
analogue of the folk theorem, linking alternating update regret and
exploitability.

\begin{theorem}
  \label{thm:folk-theorem}
  Let $\boldsymbol{\sigma}^t$ be the strategy profile at some time
  $t$, and $r^t_p$ be the regrets computed using alternating updates
  so that Player 1 regrets are updated using
  $\boldsymbol{v}^{(\sigma^t_1,\sigma^t_2)}$ and Player 2 regrets are
  updated using $\boldsymbol{v}^{(\sigma^{t+1}_1,\sigma^t_2)}$. If the
  regrets are bounded by $r^t_p \le \epsilon_p$, then the
  exploitability of
  $(\bar{\sigma}^{[1,t]}_1,\bar{\sigma}^{[0,t-1]}_2)$ is bounded by
  $\epsilon_1+\epsilon_2-\frac{1}{t}\sum_{i=0}^{t-1}(u^{(\sigma^{i+1}_1,\sigma^i_2)}_1-u^{(\sigma^i_1,\sigma^i_2)}_1)$.
\end{theorem}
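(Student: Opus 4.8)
The plan is to expand the exploitability of the output profile $(\bar{\sigma}^{[1,t]}_1,\bar{\sigma}^{[0,t-1]}_2)$ via the zero-sum form of \eqref{eq:exploitability}, rewrite each of the two best-response values as ``regret plus average realised utility'' for that player, bound the two regrets using the hypothesis $r^t_p\le\epsilon_p$, and then collapse the leftover utility sums with a single zero-sum substitution. For Player 1, the value functions $\boldsymbol{v}^{(\sigma^i_1,\sigma^i_2)}$, $i=0,\dots,t-1$, that define $r^t_1$ are exactly those of \eqref{eq:avg-regret-strategy} with opponent iterates $\sigma^0_2,\dots,\sigma^{t-1}_2$, whose average is $\bar{\sigma}^{[0,t-1]}_2$, so that identity gives
\[
  \max_{\sigma^*_1}u^{(\sigma^*_1,\bar{\sigma}^{[0,t-1]}_2)}_1
  \;=\; r^t_1 + \tfrac{1}{t}\sum_{i=0}^{t-1}u^{(\sigma^i_1,\sigma^i_2)}_1
  \;\le\; \epsilon_1 + \tfrac{1}{t}\sum_{i=0}^{t-1}u^{(\sigma^i_1,\sigma^i_2)}_1 .
\]

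For Player 2 the only change is the one-step shift built into the alternating update \eqref{eq:cfr-value}: its step-$i$ value function is $\boldsymbol{v}^{(\sigma^{i+1}_1,\sigma^i_2)}$, so the Player 1 iterates it effectively faces are $\sigma^1_1,\dots,\sigma^t_1$ (average $\bar{\sigma}^{[1,t]}_1$) while the utility it actually realises at step $i$ is $u^{(\sigma^{i+1}_1,\sigma^i_2)}_2$. Taking $r^t_2$ to be the average external regret formed from these value functions and rerunning the derivation behind \eqref{eq:avg-regret-strategy} gives
\[
  \max_{\sigma^*_2}u^{(\bar{\sigma}^{[1,t]}_1,\sigma^*_2)}_2
  \;=\; r^t_2 + \tfrac{1}{t}\sum_{i=0}^{t-1}u^{(\sigma^{i+1}_1,\sigma^i_2)}_2
  \;\le\; \epsilon_2 + \tfrac{1}{t}\sum_{i=0}^{t-1}u^{(\sigma^{i+1}_1,\sigma^i_2)}_2 .
\]
Adding the two displays, the left-hand sides sum to $\expl(\bar{\sigma}^{[1,t]}_1,\bar{\sigma}^{[0,t-1]}_2)$ by \eqref{eq:exploitability}; on the right, zero-sum gives $u^{(\sigma^{i+1}_1,\sigma^i_2)}_2=-u^{(\sigma^{i+1}_1,\sigma^i_2)}_1$, so the two utility sums merge into $\tfrac{1}{t}\sum_{i=0}^{t-1}\bigl(u^{(\sigma^i_1,\sigma^i_2)}_1-u^{(\sigma^{i+1}_1,\sigma^i_2)}_1\bigr)$, i.e.\ exactly minus the stated correction term, which finishes the proof.

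The computation itself is short, so the real content is the index bookkeeping --- precisely the place where alternating updates break the usual folk theorem. One has to verify that Player 1's regret is measured against the opponent average $\bar{\sigma}^{[0,t-1]}_2$ whereas Player 2's is measured against the shifted average $\bar{\sigma}^{[1,t]}_1$, and that these are exactly the two components of the output profile whose exploitability we are bounding; getting this matching right is what makes the two ``regret plus realised utility'' identities line up. Apart from that matching and the zero-sum substitution, the only ingredient is the standard fact already used in \eqref{eq:avg-regret-strategy}, that a player's expected utility against the average of the opponent's strategies equals the average of the per-iteration expected utilities, which is what lets the maximum over deviations commute with the sum over iterations. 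It is worth noting that the residual term $u^{(\sigma^{i+1}_1,\sigma^i_2)}_1-u^{(\sigma^i_1,\sigma^i_2)}_1$ is the gain to Player 1 from performing its own step-$i$ update against the frozen opponent strategy $\sigma^i_2$, so a lower bound on it --- strategy improvement --- is what will subsequently recover the clean $\epsilon_1+\epsilon_2$ bound for CFR and CFR$^+$.
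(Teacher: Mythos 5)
Your proposal is correct and follows essentially the same route as the paper: both expand the two best-response values via Equation~\ref{eq:avg-regret-strategy} against the shifted averages $\bar{\sigma}^{[0,t-1]}_2$ and $\bar{\sigma}^{[1,t]}_1$, sum them into the exploitability via Equation~\ref{eq:exploitability}, apply $r^t_p\le\epsilon_p$, and collapse the residual utility sums with the zero-sum identity. The only difference is presentational --- you rearrange each regret as ``best response equals regret plus realised utility'' before adding, whereas the paper starts from $r^t_1+r^t_2$ and expands --- and your index bookkeeping matches the paper's exactly.
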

\begin{proof}
  Consider the sum of regrets for both players, $r^t_1+r^t_2$
  \begin{align}
    = & \max_{\sigma^*_1} u^{(\sigma^*_1,\bar{\sigma}^{[0,t-1]}_2)}_1 - \frac{1}{t} \sum_{i=0}^{t-1} u^{(\sigma^i_1,\sigma^i_2)}_1 + \max_{\sigma^*_2} u^{(\bar{\sigma}^{[1,t]}_1,\sigma^*_2)}_2 - \frac{1}{t} \sum_{i=0}^{t-1} u^{(\sigma^{i+1}_1,\sigma^i_2)}_2 & \mbox{by Eq.~\ref{eq:avg-regret-strategy}} \nonumber \\
    = & \expl{(\bar{\sigma}^{[1,t]}_1,\bar{\sigma}^{[0,t-1]}_2)} - \frac{1}{t} \sum_{i=0}^{t-1} ( u^{(\sigma^i_1,\sigma^i_2)}_1 + u^{(\sigma^{i+1}_1,\sigma^i_2)}_2 ) & \mbox{by Eq.~\ref{eq:exploitability}} \nonumber
  \intertext{Given $r^t_p\le\epsilon_p$ for all players $p$, we have $\expl{(\bar{\sigma}^{[1,t]}_1,\bar{\sigma}^{[0,t-1]}_2)}$}
    \le & \epsilon_1 + \epsilon_2 + \frac{1}{t} \sum_{i=0}^{t-1} ( u^{(\sigma^i_1,\sigma^i_2)}_1 + u^{(\sigma^{i+1}_1,\sigma^i_2)}_2 ) \nonumber \\
    = & \epsilon_1 + \epsilon_2 + \frac{1}{t} \sum_{i=0}^{t-1} ( u^{(\sigma^i_1,\sigma^i_2)}_1 - u^{(\sigma^{i+1}_1,\sigma^i_2)}_1 ) & \mbox{by zero-sum} \nonumber \\
    = & \epsilon_1 + \epsilon_2 - \frac{1}{t} \sum_{i=0}^{t-1} ( u^{(\sigma^{i+1}_1,\sigma^i_2)}_1 - u^{(\sigma^i_1,\sigma^i_2)}_1 ) & \nonumber
  \end{align}
\end{proof}

The gap between regret and exploitability in
Observation~\ref{obs:alternating-counterexample} is now apparent as a
trailing sum in Theorem~\ref{thm:folk-theorem}. Each term in the sum
measures the improvement in expected utility for Player 1 from time
$t$ to time $t+1$. Motivated by this sum, we show that
regret-matching, CFR, and their $^+$ variants generate new policies
which are not worse than the current policy. Using these constraints,
we construct an updated correctness proof for CFR$^+$.

\subsection{Regret-Matching and Regret-Matching$^+$ Properties}

We will show that when using regret-matching or regret-matching$^+$,
the expected utility $\boldsymbol{\sigma}^{t+1}\cdot\boldsymbol{v}^t$
is never less than $\boldsymbol{\sigma}^t\cdot\boldsymbol{v}^t$. To do
this, we will need to show these algorithms have a couple of other
properties. We start by showing that once there is at least one
positive stored regret or regret-like value, there will always be a
positive stored value.

\begin{lemma}
  \label{thm:positive-regret}
  For any $t$, let $\boldsymbol{s}^t$ be the stored value
  $\boldsymbol{r}^t$ used by regret-matching or $\boldsymbol{q}^t$
  used by regret-matching$^+$, and $\boldsymbol{\sigma}^t$ be the
  associated policy.  Then for all $t$ where $\exists a \in A$ such
  that $s^t_a > 0$, there $\exists b \in A$ such that $s^{t+1}_{b}>0$.
\end{lemma}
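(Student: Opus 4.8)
The plan is to argue by cases on whether $\boldsymbol{\sigma}^{t+1}$ is the uniform policy or a proportional-to-positive-part policy, and to use the regret/regret-like update recurrences to trace what happens to the stored values. Since $\exists a$ with $s^t_a > 0$, the regret-matching map gives $\boldsymbol{\sigma}^t = \boldsymbol{s}^{t,+}/(\boldsymbol{1}\cdot\boldsymbol{s}^{t,+})$, so $\boldsymbol{\sigma}^t$ is supported exactly on the coordinates where $s^t_a > 0$, and consequently $\boldsymbol{\sigma}^t\cdot\boldsymbol{v}^t$ is a convex combination of the values $v^t_a$ over those coordinates. The key observation I would extract from this is that $\boldsymbol{\sigma}^t\cdot\boldsymbol{v}^t \le \max_{a : s^t_a > 0} v^t_a$, so there is at least one action $b$ with $s^t_b > 0$ and $v^t_b - \boldsymbol{\sigma}^t\cdot\boldsymbol{v}^t \ge 0$.

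For that action $b$, look at the update. In the regret-matching$^+$ case, $q^{t+1}_b = (q^t_b + v^t_b - \boldsymbol{\sigma}^t\cdot\boldsymbol{v}^t)^+$; since $q^t_b = s^t_b > 0$ and the increment $v^t_b - \boldsymbol{\sigma}^t\cdot\boldsymbol{v}^t \ge 0$, the quantity inside the $(\cdot)^+$ is strictly positive, hence $q^{t+1}_b > 0$ and we are done. In the regret-matching case $r^{t+1}_b = r^t_b + v^t_b - \boldsymbol{\sigma}^t\cdot\boldsymbol{v}^t \ge r^t_b = s^t_b > 0$, so again $s^{t+1}_b > 0$. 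So the same action $b$ works in both cases, and the whole argument reduces to the convex-combination bound.

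The one place to be careful — and the main (mild) obstacle — is the inequality $\boldsymbol{\sigma}^t\cdot\boldsymbol{v}^t \le \max_{a : s^t_a > 0} v^t_a$: I need that $\boldsymbol{\sigma}^t$ places zero weight on every coordinate with $s^t_a \le 0$, which is exactly what the first branch of $\boldsymbol{\sigma}_{\mathrm{rm}}$ in Eq.~\ref{eq:rm-normalise} guarantees under the hypothesis $\exists a$ with $s^t_a > 0$ (so we are never in the uniform branch), together with the fact that $x^+ = 0$ when $x \le 0$. Once that is pinned down, picking $b \in \argmax_{a : s^t_a > 0} v^t_a$ (any maximiser) gives both $s^t_b > 0$ and a nonnegative increment, and the two one-line update computations above finish the proof. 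No case split on uniform-vs-proportional is actually needed in the end, since the hypothesis rules out the uniform branch at time $t$.
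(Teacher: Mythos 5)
Your proof is correct and rests on the same core fact as the paper's: since $\exists a$ with $s^t_a>0$, the policy $\boldsymbol{\sigma}^t$ is supported exactly on the positive coordinates of $\boldsymbol{s}^t$, so $\boldsymbol{\sigma}^t\cdot\boldsymbol{v}^t$ is a weighted average over that support and some supported action $b$ has $s^t_b>0$ with a non-negative increment. You merely fold the paper's two cases (where it keeps the given action $a$ if its increment is non-negative, and otherwise extracts such a $b$ from the identity $\sum_b\sigma^t_b(v^t_b-\boldsymbol{\sigma}^t\cdot\boldsymbol{v}^t)=0$) into one by choosing the maximiser directly, which is a harmless streamlining of the same argument.
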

\begin{proof}
  Consider any time $t$ where $\exists{}a\in{}A$ such that
  $s^t_a>0$. The policy at time $t$ is then
  \begin{align}
    & \boldsymbol{\sigma}^t=\boldsymbol{s}^{t,+}/(\boldsymbol{1}\cdot\boldsymbol{s}^{t,+}) & \mbox{by Eqs.~\ref{eq:rm-normalise},~\ref{eq:regret-matching},~\ref{eq:regret-matching-plus}} \label{eq:positive-regret-strategy}
  \end{align}
  Consider the stored value $s^{t+1}_a$. With regret-matching
  $s^{t+1}_a=r^{t+1}_a=r^t_a+v^t_a-\boldsymbol{\sigma}^t\cdot\boldsymbol{v}^t$
  by Equation~\ref{eq:action-regret}, and with regret-matching$^+$
  $s^{t+1}_a=q^{t+1}_a=(q^t_a+v^t_a-\boldsymbol{\sigma}^t\cdot\boldsymbol{v}^t)^+$
  by Equation~\ref{eq:action-regret-plus}.  For both algorithms, the
  value of $s^{t+1}_a$ depends on
  $v^t_a-\boldsymbol{\sigma}^t\cdot\boldsymbol{v}^t$. There are two
  cases:
  \begin{enumerate}
  \item $v^t_a - \boldsymbol{\sigma}^t\cdot\boldsymbol{v}^t \ge 0$
    \begin{align}
      & s^{t+1}_a > 0 & \mbox{by Lemma assumption, Eq.~\ref{eq:action-regret},~\ref{eq:action-regret-plus}} \nonumber
    \end{align}
    
  \item $v^t_a - \boldsymbol{\sigma}^t\cdot\boldsymbol{v}^t < 0$
    \begin{align}
      & \sigma^t_a(v^t_a - \boldsymbol{\sigma}^t \cdot \boldsymbol{v}^t) < 0 & \mbox{by $s^t_a>0$, Eq.~\ref{eq:positive-regret-strategy}} \nonumber \\
      & 0 < \boldsymbol{\sigma}^t \cdot \boldsymbol{v}^t - \boldsymbol{\sigma}^t \cdot \boldsymbol{v}^t - \sigma^t_a(v^t_a - \boldsymbol{\sigma}^t \cdot \boldsymbol{v}^t) & \nonumber \\
      & 0 < \sum_{b \in A} \bigl( \sigma^t_b (v^t_b - \boldsymbol{\sigma}^t \cdot \boldsymbol{v}^t)\bigr) - \sigma^t_a(v^t_a - \boldsymbol{\sigma}^t \cdot \boldsymbol{v}^t) & \mbox{by $\sum_{b \in A}\sigma^t_b = 1$} \nonumber \\
      & 0 < \sum_{b \ne a} \bigl( \sigma^t_b (v^t_b - \boldsymbol{\sigma}^t \cdot \boldsymbol{v}^t)\bigr) & \nonumber \\
      & \exists b \mbox{ s.t. } \sigma^t_b > 0,~ v^t_b - \boldsymbol{\sigma}^t\cdot\boldsymbol{v}^t > 0 & \mbox{by $\sigma^t_{a'} \ge 0 $ for all $a' \in A$} \nonumber \\
      & s^t_b > 0,~ v^t_b - \boldsymbol{\sigma}^t\cdot\boldsymbol{v}^t > 0 & \mbox{by Eq.~\ref{eq:positive-regret-strategy}} \nonumber \\
      & s^{t+1}_b > 0 & \mbox{by Eqs.~\ref{eq:action-regret},~\ref{eq:action-regret-plus}} \nonumber
    \end{align}
  \end{enumerate}
  In both cases, $\exists b$ such that $s^{t+1}_b > 0$.
\end{proof}

There is a corollary to Lemma~\ref{thm:positive-regret}, that
regret-matching and regret-matching$^+$ never switch back to playing
the default uniform random policy once they switch away from it.

\begin{corollary}
  When using regret-matching or regret-matching$^+$, if there exists a
  time $t$ such that
  $\sigma^t=\boldsymbol{s}^{t,+}/(\boldsymbol{1}\cdot\boldsymbol{s}^{t,+})$
  where $\boldsymbol{s}^t$ are the stored regrets $\boldsymbol{r}^t$
  or regret-like values $\boldsymbol{q}^t$ at time $t$, then
  $\sigma^{t'}=\boldsymbol{s}^{t',+}/(\boldsymbol{1}\cdot\boldsymbol{s}^{t',+})$
  for all $t'\ge{}t$.
\end{corollary}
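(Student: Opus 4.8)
The plan is to derive the corollary directly from Lemma~\ref{thm:positive-regret} by a short induction on $t'$. First I would note that the hypothesis $\sigma^t=\boldsymbol{s}^{t,+}/(\boldsymbol{1}\cdot\boldsymbol{s}^{t,+})$ can hold only when the denominator $\boldsymbol{1}\cdot\boldsymbol{s}^{t,+}$ is nonzero, i.e.\ only when there exists $a\in A$ with $s^t_a>0$. By the definition of $\boldsymbol{\sigma}_{\mbox{rm}}$ in Eq.~\ref{eq:rm-normalise} together with Eqs.~\ref{eq:regret-matching} and~\ref{eq:regret-matching-plus}, this is exactly the first case in the definition of the policy, so the hypothesis of the corollary at time $t$ is equivalent to the hypothesis of Lemma~\ref{thm:positive-regret} at time $t$. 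This translation is the only slightly subtle point, and it is immediate from Eq.~\ref{eq:rm-normalise}.

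Next I would run the induction. The base case $t'=t$ is the given assumption, which as above supplies some $a$ with $s^t_a>0$. For the inductive step, assume there exists $a$ with $s^{t'}_a>0$; Lemma~\ref{thm:positive-regret} then yields some $b$ with $s^{t'+1}_b>0$, and hence, again by Eqs.~\ref{eq:rm-normalise},~\ref{eq:regret-matching},~\ref{eq:regret-matching-plus}, $\sigma^{t'+1}=\boldsymbol{s}^{t'+1,+}/(\boldsymbol{1}\cdot\boldsymbol{s}^{t'+1,+})$. This both establishes the claimed form of $\sigma^{t'+1}$ and preserves the positivity hypothesis needed to continue, closing the induction for all $t'\ge t$.

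I do not expect a genuine obstacle here: the entire combinatorial content is carried by Lemma~\ref{thm:positive-regret}, and the corollary is little more than its iterate. Two minor things I would make explicit in the write-up are (i) the equivalence between ``the policy has normalized-positive-part form'' and ``some stored value is strictly positive'', noted above, and (ii) that the argument applies verbatim to both regret-matching and regret-matching$^+$, since Lemma~\ref{thm:positive-regret} is stated uniformly for $\boldsymbol{s}^t$ ranging over $\boldsymbol{r}^t$ and $\boldsymbol{q}^t$.
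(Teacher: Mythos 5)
Your proposal is correct and follows essentially the same route as the paper: an induction on $t'$ whose base case is the hypothesis and whose inductive step combines Lemma~\ref{thm:positive-regret} with the equivalence (via Eq.~\ref{eq:rm-normalise}) between the normalized-positive-part form of the policy and the existence of a strictly positive stored value. Your explicit remark on that equivalence is the only point the paper leaves implicit, and it is a harmless addition.
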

\begin{proof}
  Assume that at some time $t$,
  $\sigma^t=\boldsymbol{s}^{t,+}/(\boldsymbol{1}\cdot\boldsymbol{s}^{t,+})$.
  We can show by induction that
  $\sigma^{t'}=\boldsymbol{s}^{t',+}/(\boldsymbol{1}\cdot\boldsymbol{s}^{t',+})$
  for all $t'\ge{}t$.  The base case $t'=t$ of the hypothesis holds by
  assumption. Now, assume that
  $\sigma^{t'}=\boldsymbol{s}^{t',+}/(\boldsymbol{1}\cdot\boldsymbol{s}^{t',+})$
  for some time $t'\ge{}t$. We have
  \begin{align}
    & \exists a \in A \mbox{ s.t. } s^{t'}_a > 0 & \mbox{by Eq.~\ref{eq:rm-normalise}} \nonumber \\
    & \exists b \in A \mbox{ s.t. } s^{t'+1}_b > 0 & \mbox{by Lemma~\ref{thm:positive-regret}} \nonumber \\
    & \sigma^{t'+1} = \boldsymbol{s}^{t'+1,+}/(\boldsymbol{1}\cdot\boldsymbol{s}^{t'+1,+}) & \mbox{by Eq.~\ref{eq:rm-normalise}} \nonumber
  \end{align}
  Therefore, by induction the hypothesis holds for all $t'\ge{}t$.
\end{proof}

\begin{lemma}
  \label{thm:rm-positive-delta}
  For any $t$, let $\boldsymbol{s}^t$ be the stored value
  $\boldsymbol{r}^t$ used by regret-matching or $\boldsymbol{q}^t$
  used by regret-matching$^+$, and $\boldsymbol{\sigma}^t$ be the
  associated policy. Then for all $t$ and $a \in A$, \\
  $(s^{t+1,+}_a-s^{t,+}_a)(v^t_a-\boldsymbol{\sigma}^t\cdot\boldsymbol{v}^t)\ge{}0$.
\end{lemma}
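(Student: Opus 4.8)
The plan is to collapse both algorithms onto a single elementary inequality about the map $x\mapsto x^+$. Write $d\defeq v^t_a-\boldsymbol{\sigma}^t\cdot\boldsymbol{v}^t$ for the instantaneous regret of action $a$ at time $t$. The first step is to observe that, for both stored values, $s^{t+1,+}_a=(s^t_a+d)^+$. For regret-matching this is immediate from Equation~\ref{eq:action-regret}: $s^{t+1}_a=r^{t+1}_a=s^t_a+d$, so $s^{t+1,+}_a=(s^t_a+d)^+$. For regret-matching$^+$, Equation~\ref{eq:action-regret-plus} gives $s^{t+1}_a=q^{t+1}_a=(s^t_a+d)^+\ge 0$, and applying $(\cdot)^+$ once more leaves it unchanged, so again $s^{t+1,+}_a=(s^t_a+d)^+$. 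By definition $s^{t,+}_a=(s^t_a)^+$, so the claim reduces to showing $\bigl((s^t_a+d)^+-(s^t_a)^+\bigr)\,d\ge 0$.

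The second step is a two-case argument using the fact that $x\mapsto x^+$ is non-decreasing. If $d\ge 0$ then $s^t_a+d\ge s^t_a$, hence $(s^t_a+d)^+\ge(s^t_a)^+$, so the bracket is $\ge 0$ and multiplying by $d\ge 0$ gives a non-negative product. If $d<0$ then $s^t_a+d< s^t_a$, hence $(s^t_a+d)^+\le(s^t_a)^+$, so the bracket is $\le 0$ and multiplying by $d<0$ again gives a non-negative product. In either case $(s^{t+1,+}_a-s^{t,+}_a)(v^t_a-\boldsymbol{\sigma}^t\cdot\boldsymbol{v}^t)\ge 0$, which is the statement.

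There is essentially no hard step here; the only point that needs care is the first observation, namely that regret-matching and regret-matching$^+$ — although they update their stored values differently — both yield $s^{t+1,+}_a=(s^t_a+d)^+$, so that the monotonicity argument can be run once and applied to both algorithms at the same time. For regret-matching$^+$ this uses the idempotence of $(\cdot)^+$ together with the fact, immediate from Equation~\ref{eq:action-regret-plus} and $\boldsymbol{q}^0=\boldsymbol{0}$, that its stored values are already non-negative; note also that the argument nowhere uses the particular form of $\boldsymbol{\sigma}^t$, so it covers the case where $\boldsymbol{\sigma}^t$ is the default uniform policy as well.
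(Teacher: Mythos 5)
Your proof is correct and takes essentially the same route as the paper: a two-case analysis on the sign of $d=v^t_a-\boldsymbol{\sigma}^t\cdot\boldsymbol{v}^t$ combined with the monotonicity of $x\mapsto x^+$. Your preliminary reduction of both update rules to the single identity $s^{t+1,+}_a=(s^t_a+d)^+$ (via idempotence of $(\cdot)^+$ and non-negativity of $\boldsymbol{q}^t$) is a minor but clean streamlining of the paper's argument, which instead treats regret-matching and regret-matching$^+$ separately inside each case.
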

\begin{proof}
  Consider whether $v^t_a - \boldsymbol{\sigma}^t \cdot \boldsymbol{v}^t$ is positive. There are two cases.
  \begin{enumerate}
  \item $v^t_a - \boldsymbol{\sigma}^t \cdot \boldsymbol{v}^t \le 0$
    
    For regret-matching, where $s^t_a=r^t_a$, we have
    \begin{align}
      & r^{t+1}_a = r^t_a + v^t_a - \boldsymbol{\sigma}^t \cdot \boldsymbol{v}^t & \mbox{by Eq.~\ref{eq:action-regret}} \nonumber \\
      & r^{t+1}_a \le r^t_a & \nonumber \\
      & r^{t+1,+}_a \le r^{t,+}_a & \nonumber
    \end{align}
    For regret-matching$^+$, where $s^t_a=q^t_a$, we have
    \begin{align}
      & q^{t+1,+}_a = (q^t_a + v^t_a - \boldsymbol{\sigma}^t \cdot \boldsymbol{v}^t)^+ & \mbox{by Eq.~\ref{eq:action-regret-plus}} \nonumber \\
      & q^{t+1,+}_a = (q^{t,+}_a + v^t_a - \boldsymbol{\sigma}^t \cdot \boldsymbol{v}^t)^+ & \mbox{by Eq.~\ref{eq:action-regret-plus}} \nonumber \\
      & q^{t+1,+}_a \le q^{t,+}_a & \mbox{by monotonicity of $(\cdot)^+$} \nonumber
    \end{align}
    Therefore for both algorithms we have
    \begin{align}
      s^{t+1,+}_a - s^{t,+}_a \le 0 \nonumber \\
      (s^{t+1,+}_a - s^{t,+}_a) (v^t_a - \boldsymbol{\sigma}^t \cdot \boldsymbol{v}^t) \ge 0 \nonumber
    \end{align}

  \item $v^t_a - \boldsymbol{\sigma}^t \cdot \boldsymbol{v}^t > 0$
    \begin{align}
      & s^{t+1}_a = s^t_a + v^t_a - \boldsymbol{\sigma}^t \cdot \boldsymbol{v}^t & \mbox{by Eqs.~\ref{eq:action-regret},~\ref{eq:action-regret-plus}} \nonumber \\
      & s^{t+1}_a > s^t_a \nonumber \\
      & s^{t+1,+}_a \ge s^{t,+}_a \nonumber \\
      & (s^{t+1,+}_a - s^{t,+}_a)(v^t_a - \boldsymbol{\sigma}^t \cdot \boldsymbol{v}^t) \ge 0 \nonumber
    \end{align}
  \end{enumerate}
  In both cases, we have $(s^{t+1,+}_a-s^{t,+}_a)(v^t_a-\boldsymbol{\sigma}^t\cdot\boldsymbol{v}^t)\ge{}0$.
\end{proof}

\begin{theorem}
  \label{thm:rm-improvement}
  If $\boldsymbol{\sigma}^0,\boldsymbol{\sigma}^1,...$ is the sequence
  of regret-matching or regret-matching$^+$ policies generated from
  a sequence of value functions
  $\boldsymbol{v}^0,\boldsymbol{v}^1,...$, then for all $t$,
  $\boldsymbol{\sigma}^{t+1}\cdot\boldsymbol{v}^t\ge\boldsymbol{\sigma}^t\cdot\boldsymbol{v}^t$.
\end{theorem}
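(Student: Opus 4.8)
The plan is to reduce the claim $(\boldsymbol{\sigma}^{t+1}-\boldsymbol{\sigma}^t)\cdot\boldsymbol{v}^t\ge 0$ to a statement about the \emph{centred} value vector $\boldsymbol{v}^t-(\boldsymbol{\sigma}^t\cdot\boldsymbol{v}^t)\boldsymbol{1}$, since the per-coordinate sign of exactly this vector is what Lemma~\ref{thm:rm-positive-delta} controls. Because $\boldsymbol{\sigma}^{t+1}$ and $\boldsymbol{\sigma}^t$ are both probability distributions, $(\boldsymbol{\sigma}^{t+1}-\boldsymbol{\sigma}^t)\cdot\boldsymbol{1}=0$, so for any scalar $c$ we have $(\boldsymbol{\sigma}^{t+1}-\boldsymbol{\sigma}^t)\cdot\boldsymbol{v}^t=(\boldsymbol{\sigma}^{t+1}-\boldsymbol{\sigma}^t)\cdot(\boldsymbol{v}^t-c\boldsymbol{1})$; I take $c=\boldsymbol{\sigma}^t\cdot\boldsymbol{v}^t$. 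With this choice $\boldsymbol{\sigma}^t\cdot(\boldsymbol{v}^t-c\boldsymbol{1})=c-c=0$, so it suffices to prove $\boldsymbol{\sigma}^{t+1}\cdot(\boldsymbol{v}^t-c\boldsymbol{1})\ge 0$.

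Next I split on whether $\boldsymbol{\sigma}^{t+1}$ is the proportional policy or the uniform default. If $s^{t+1}_b>0$ for some $b$, then $\boldsymbol{\sigma}^{t+1}=\boldsymbol{s}^{t+1,+}/(\boldsymbol{1}\cdot\boldsymbol{s}^{t+1,+})$ with a strictly positive normaliser, so it is enough to show $\sum_a s^{t+1,+}_a(v^t_a-c)\ge 0$. Summing Lemma~\ref{thm:rm-positive-delta} over all $a$ gives $\sum_a(s^{t+1,+}_a-s^{t,+}_a)(v^t_a-c)\ge 0$, hence $\sum_a s^{t+1,+}_a(v^t_a-c)\ge\sum_a s^{t,+}_a(v^t_a-c)$, so I only need the right-hand side to be $\ge 0$ — and in fact it is exactly $0$. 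Either $\boldsymbol{s}^t$ has a positive entry, in which case $\boldsymbol{s}^{t,+}=(\boldsymbol{1}\cdot\boldsymbol{s}^{t,+})\,\boldsymbol{\sigma}^t$ and $\sum_a s^{t,+}_a(v^t_a-c)=(\boldsymbol{1}\cdot\boldsymbol{s}^{t,+})\,\boldsymbol{\sigma}^t\cdot(\boldsymbol{v}^t-c\boldsymbol{1})=0$; or $\boldsymbol{s}^t$ has no positive entry, in which case $\boldsymbol{s}^{t,+}=\boldsymbol{0}$ and the sum is trivially $0$.

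For the remaining case, where no $s^{t+1}_b$ is positive, I invoke Lemma~\ref{thm:positive-regret} in contrapositive form: if $\boldsymbol{s}^{t+1}$ has no positive entry then neither does $\boldsymbol{s}^t$, so $\boldsymbol{\sigma}^t=\boldsymbol{\sigma}^{t+1}=\boldsymbol{1}/|A|$ and $\boldsymbol{\sigma}^{t+1}\cdot\boldsymbol{v}^t=\boldsymbol{\sigma}^t\cdot\boldsymbol{v}^t$, giving the inequality with equality. I expect the main obstacle to be spotting the right reformulation — centring $\boldsymbol{v}^t$ by $\boldsymbol{\sigma}^t\cdot\boldsymbol{v}^t$ so that the $\boldsymbol{\sigma}^t$ cross term vanishes and the $\boldsymbol{s}^{t,+}$ contribution drops out of the summed Lemma~\ref{thm:rm-positive-delta} bound; once that is in place the rest is bookkeeping, with only the minor wrinkle of separately handling the uniform-default branch via Lemma~\ref{thm:positive-regret}.
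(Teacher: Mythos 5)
Your proof is correct, and it takes a genuinely different (and tidier) route than the paper's. Both arguments rest on the same two ingredients: Lemma~\ref{thm:positive-regret} to dispose of the branch where $\boldsymbol{s}^{t+1}$ has no positive entry, and Lemma~\ref{thm:rm-positive-delta} for the substantive inequality. But where the paper splits into three cases and, in the main case $\exists a~s^t_a>0$, $\exists b~s^{t+1}_b>0$, runs an induction that replaces the coordinates of $\boldsymbol{s}^t$ by those of $\boldsymbol{s}^{t+1}$ one at a time (Equations~\ref{eq:rm-improvement-partial-vector}--\ref{eq:rm-improvement-hypothesis}), checking along the way that each intermediate policy $\boldsymbol{\sigma}(\boldsymbol{w}^i)$ is well defined, you simply centre the value vector by $c=\boldsymbol{\sigma}^t\cdot\boldsymbol{v}^t$ and sum Lemma~\ref{thm:rm-positive-delta} over all actions. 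The centring kills the $\boldsymbol{\sigma}^t$ cross term and reduces the claim to $\sum_a s^{t+1,+}_a(v^t_a-c)\ge\sum_a s^{t,+}_a(v^t_a-c)=0$, where the final equality holds whether or not $\boldsymbol{s}^t$ has a positive entry (either $\boldsymbol{s}^{t,+}$ is a positive multiple of $\boldsymbol{\sigma}^t$, or it is $\boldsymbol{0}$). This collapses the paper's cases 2 and 3 into one argument and eliminates the induction together with its well-definedness bookkeeping; the paper's telescoping version buys only a more explicit coordinate-by-coordinate picture of where the improvement comes from. Your use of Lemma~\ref{thm:rm-positive-delta} in the branch where $\boldsymbol{s}^{t,+}=\boldsymbol{0}$ is legitimate, since that lemma is proved without any assumption about which branch of Equation~\ref{eq:rm-normalise} defines $\boldsymbol{\sigma}^t$.
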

\begin{proof}
Let $\boldsymbol{s}^t$ be the stored value $\boldsymbol{r}^t$ used by
regret-matching or $\boldsymbol{q}^t$ used by regret-matching$^+$.
Consider whether all components of $\boldsymbol{s}^t$ or
$\boldsymbol{s}^{t+1}$ are negative. By
Lemma~\ref{thm:positive-regret} we know that it can not be the case
that $\exists{}a~s^t_a>0$ and $\forall{}b~s^{t+1}_b\le{}0$. This
leaves three cases.
  \begin{enumerate}
  \item $\forall a~ s^t_a \le 0$ and $\forall a~ s^{t+1}_a \le 0$
    \begin{align}
      & \boldsymbol{\sigma}^t = \boldsymbol{\sigma}^{t+1} = \boldsymbol{1}/|A| & \mbox{by Eqs.~\ref{eq:rm-normalise},~\ref{eq:regret-matching},~\ref{eq:regret-matching-plus}} \nonumber \\
      & \boldsymbol{\sigma}^{t+1} \cdot \boldsymbol{v}^t = \boldsymbol{\sigma}^t \cdot \boldsymbol{v}^t & \nonumber
    \end{align}

  \item $\forall a~ s^t_a \le 0$ and $\exists a~ s^{t+1}_a > 0$
    \begin{align}
      & \boldsymbol{\sigma}^{t+1} = \boldsymbol{s}^{t+1,+} / (\boldsymbol{1} \cdot \boldsymbol{s}^{t+1,+}) & \mbox{by Eqs.~\ref{eq:rm-normalise},~\ref{eq:regret-matching},~\ref{eq:regret-matching-plus}} \nonumber \\
      & \forall b,~ \sigma^{t+1}_b > 0 \implies s^{t+1}_b > 0 & \nonumber \\
      & \forall b,~ \sigma^{t+1}_b > 0 \implies v^t_b > \boldsymbol{\sigma}^t \cdot \boldsymbol{v}^t & \mbox{by Eqs.~\ref{eq:action-regret},~\ref{eq:action-regret-plus}} \nonumber \\
      & \sum_{b\in{}A} \sigma^{t+1}_b v^t_b > \sum_b \sigma^{t+1}_b \boldsymbol{\sigma}^t \cdot \boldsymbol{v}^t & \mbox{by $\sigma^{t+1}_b \ge 0$} \nonumber \\
      & \boldsymbol{\sigma}^{t+1} \cdot \boldsymbol{v}^t > \boldsymbol{\sigma}^t \cdot \boldsymbol{v}^t & \mbox{by $\sum_{b\in{}A} \sigma^{t+1}_b = 1$} \nonumber
    \end{align}

  \item $\exists a~ s^t_a > 0$ and $\exists b~ s^{t+1}_b > 0$ \\
    Let
    \begin{align}
      \boldsymbol{\sigma}(\boldsymbol{w}) \defeq \boldsymbol{w}^+/(\boldsymbol{w}^+ \cdot \boldsymbol{1}) \label{eq:rm-improvement-sigma}
    \end{align}
    Then we have
    \begin{align}
      & \boldsymbol{\sigma}^t = \boldsymbol{\sigma}(\boldsymbol{s}^t),~ \boldsymbol{\sigma}^{t+1} = \boldsymbol{\sigma}(\boldsymbol{s}^{t+1}) & \mbox{by Eqs.~\ref{eq:rm-normalise},~\ref{eq:regret-matching},~\ref{eq:regret-matching-plus}} \label{eq:rm-improvement-sigma-equality}
    \end{align}
    Consider any ordering $a_1, a_2, ..., a_{|A|}$ of actions such that
    $b\prec{}a$. Let
    \begin{align}
       w^i_j & \defeq \left\{ \begin{array}{ll}
        s^{t+1}_j & j \le i \\
        s^t_j & j > i
      \end{array} \right. \label{eq:rm-improvement-partial-vector}
    \end{align}
    Note that $\forall{}i<a$, $w^i_a=s^t_a>0$, and $\forall{}i\ge{}a$,
    $w^i_b=s^{t+1}_b>0$, so that
    $\boldsymbol{\sigma}(\boldsymbol{w}^i)$ is always well-defined. We
    can show by induction that for all $i\ge0$
    \begin{align}
      & \boldsymbol{\sigma}(\boldsymbol{w}^i)\cdot\boldsymbol{v}^t\ge\boldsymbol{\sigma}^t\cdot\boldsymbol{v}^t & \label{eq:rm-improvement-hypothesis}
      \end{align}
    For the base case of $i=0$, we have
    \begin{align}
      & \boldsymbol{w}^0 = \boldsymbol{s}^t & \mbox{by Eq.~\ref{eq:rm-improvement-partial-vector}} \nonumber \\
      & \boldsymbol{\sigma}(\boldsymbol{w}^0) \cdot \boldsymbol{v}^t = \boldsymbol{\sigma}(\boldsymbol{s}^t) \cdot \boldsymbol{v}^t & \nonumber \\
      & \boldsymbol{\sigma}(\boldsymbol{w}^0) \cdot \boldsymbol{v}^t = \boldsymbol{\sigma}^t \cdot \boldsymbol{v}^t & \mbox{by Eq.~\ref{eq:rm-improvement-sigma-equality}} \nonumber
    \end{align}
    Now assume that Equation~\ref{eq:rm-improvement-hypothesis} holds
    for some $i \ge 0$. By construction,
    \begin{align}
      & \forall{}j\ne{}{i+1},~w^{i+1}_j=w^i_j & \mbox{by Eq.~\ref{eq:rm-improvement-partial-vector}} \label{eq:rm-improvement-partial-vector-similarity}
    \end{align}
    For notational convenience, let
    $\Delta_w\defeq{}w^{i+1,+}_{i+1}-w^{i,+}_{i+1}=s^{t+1,+}_{i+1}-s^{t,+}_{i+1}$.
    \begin{align}
      & \boldsymbol{\sigma}(\boldsymbol{w}^{i+1}) \cdot \boldsymbol{v}^t - \boldsymbol{\sigma}^t \cdot \boldsymbol{v}^t & \nonumber \\
      & = \frac{\boldsymbol{w}^{i+1,+} \cdot \boldsymbol{v}^t}{\boldsymbol{w}^{i+1,+} \cdot \boldsymbol{1}} - \boldsymbol{\sigma}^t \cdot \boldsymbol{v}^t & \mbox{by Eq.~\ref{eq:rm-improvement-sigma}} \nonumber \\
      & = \frac{\Delta_wv^t_{i+1} + \boldsymbol{w}^{i,+} \cdot \boldsymbol{v}^t}{\Delta_w+\boldsymbol{w}^{i,+} \cdot \boldsymbol{1}} - \boldsymbol{\sigma}^t \cdot \boldsymbol{v}^t & \mbox{by Eqs.~\ref{eq:rm-improvement-partial-vector},~\ref{eq:rm-improvement-partial-vector-similarity}} \nonumber \\
      & = \frac{\Delta_wv^t_{i+1} + (\boldsymbol{w}^{i,+} \cdot \boldsymbol{1})\boldsymbol{\sigma}(\boldsymbol{w}^{i})\cdot \boldsymbol{v}^t}{\Delta_w+\boldsymbol{w}^{i,+} \cdot \boldsymbol{1}} - \boldsymbol{\sigma}^t \cdot \boldsymbol{v}^t & \mbox{by Eq.~\ref{eq:rm-improvement-sigma}} \nonumber \\
      & \ge \frac{\Delta_wv^t_{i+1} + (\boldsymbol{w}^{i,+} \cdot \boldsymbol{1})\boldsymbol{\sigma}^t\cdot \boldsymbol{v}^t}{\Delta_w+\boldsymbol{w}^{i,+} \cdot \boldsymbol{1}} - \boldsymbol{\sigma}^t \cdot \boldsymbol{v}^t & \mbox{by ind. hypothesis} \nonumber \\
      & = \frac{\Delta_wv^t_{i+1} + (\boldsymbol{w}^{i,+} \cdot \boldsymbol{1})\boldsymbol{\sigma}^t\cdot \boldsymbol{v}^t}{\Delta_w+\boldsymbol{w}^{i,+} \cdot \boldsymbol{1}} - \frac{\Delta_w\boldsymbol{\sigma}^t \cdot \boldsymbol{v}^t + (\boldsymbol{w}^{i,+} \cdot \boldsymbol{1})\boldsymbol{\sigma}^t \cdot \boldsymbol{v}^t}{\Delta_w+\boldsymbol{w}^{i,+} \cdot \boldsymbol{1}} & \nonumber \\
      & = \frac{\Delta_w(v^t_{i+1} - \boldsymbol{\sigma}^t \cdot \boldsymbol{v}^t)}{\Delta_w+\boldsymbol{w}^{i,+} \cdot \boldsymbol{1}} & \nonumber \\
      & \ge 0 & \mbox{by Lemma~\ref{thm:rm-positive-delta}} \nonumber
    \end{align}
    $\boldsymbol{\sigma}(\boldsymbol{w}^{i+1})\cdot\boldsymbol{v}^t\ge\boldsymbol{\sigma}^t\cdot\boldsymbol{v}^t$,
    so by induction Equation~\ref{eq:rm-improvement-hypothesis} holds
    for all $i\ge{}0$. In particular, we can now say
    \begin{align}
      & \boldsymbol{\sigma}(\boldsymbol{w}^{|A|}) \cdot \boldsymbol{v}^t \ge \boldsymbol{\sigma}^t \cdot \boldsymbol{v}^t & \nonumber \\
      & \boldsymbol{\sigma}(\boldsymbol{s}^{t+1}) \cdot \boldsymbol{v}^t \ge \boldsymbol{\sigma}^t \cdot \boldsymbol{v}^t & \mbox{by Eq.~\ref{eq:rm-improvement-partial-vector}} \nonumber \\
      & \boldsymbol{\sigma}^{t+1} \cdot \boldsymbol{v}^t \ge \boldsymbol{\sigma}^t \cdot \boldsymbol{v}^t & \mbox{by Eq.~\ref{eq:rm-improvement-sigma-equality}} \nonumber
    \end{align}
  \end{enumerate}
  In all cases, we have
  $\boldsymbol{\sigma}^{t+1}\cdot\boldsymbol{v}^t\ge\boldsymbol{\sigma}^t\cdot\boldsymbol{v}^t$.
\end{proof}

\subsection{CFR and CFR$^+$ Properties}

We now show that CFR and CFR$^+$ have properties that are similar to
Theorem~\ref{thm:rm-improvement}. After a player updates their
strategy, that player's counterfactual value does not decrease for any
action at any of their information sets. Similarly, the expected value
of the player's new strategy does not decrease. Finally, using the
property of non-decreasing value, we give an updated proof of an
exploitability bound for CFR$^+$.

\begin{lemma}
  \label{thm:cfr-infoset-improvement}
  Let $p$ be the player that is about to be updated in CFR or CFR$^+$ at
  some time $t$. Let $\sigma^t_p$ be the current strategy for $p$, and
  $\sigma_o$ be the opponent strategy $\sigma^t_{-p}$ or
  $\sigma^{t+1}_{-p}$ used by Equation~\ref{eq:cfr-value}. Then
  $\forall{}I\in\mathcal{I}_p$ and $\forall{}a\in{}A(I)$,
  $v^{(\sigma^{t+1}_p,\sigma_o)}(I)_a\ge{}v^{(\sigma^t_p,\sigma_o)}(I)_a$.
\end{lemma}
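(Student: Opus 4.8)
The plan is to prove the lemma by induction on the tree structure of Player $p$'s information sets, working from the bottom of the game tree upward. The key observation is that counterfactual value at an information set decomposes over the successors: the counterfactual value $v^{(\sigma_p,\sigma_o)}(I)_a$ collects, weighted by opponent-and-chance reach probabilities, the continuation values below $ha$ for each $h\in I$, and those continuation values are themselves determined by Player $p$'s strategy at the information sets reachable from $ha$ together with the same fixed opponent strategy $\sigma_o$. Because Player $p$ uses regret-matching or regret-matching$^+$ independently at each information set on the value $\boldsymbol{v}^t(I)$, and because we are told the update at time $t$ is to a single player $p$, the relevant inputs $\boldsymbol{v}^t(I)$ are exactly $\boldsymbol{v}^{(\sigma^t_p,\sigma_o)}(I)$, so Theorem~\ref{thm:rm-improvement} applies at every $I\in\mathcal{I}_p$ to give $\boldsymbol{\sigma}^{t+1}_p(I)\cdot\boldsymbol{v}^t(I)\ge\boldsymbol{\sigma}^t_p(I)\cdot\boldsymbol{v}^t(I)$.

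The induction hypothesis I would carry is: for every Player $p$ information set $J$, and every state $h\in J$ and action $a\in A(J)$, the counterfactual value at the child, $v^{(\sigma^{t+1}_p,\sigma_o)}_p(ha\mid\cdot)$-type quantity aggregated appropriately, satisfies $v^{(\sigma^{t+1}_p,\sigma_o)}(J)_a\ge v^{(\sigma^t_p,\sigma_o)}(J)_a$, ordering the information sets so that a set is processed only after all Player $p$ information sets strictly below it. For the base case (information sets with no Player $p$ information set below them), the continuation value below each $ha$ does not depend on $\sigma_p$ at all — only on $\sigma_o$ and chance — so the two sides are equal and the inequality holds trivially. For the inductive step, fix $I\in\mathcal{I}_p$ and $a\in A(I)$. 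Each terminal $z\in Z(ha)$ for $h\in I$ either reaches no further Player $p$ information set, in which case its contribution is unchanged, or passes through some next Player $p$ information set $J$ with $I(ha)\sqsubseteq\cdots$; grouping terminals by that next information set $J$ and the action taken there, the contribution to $v^{(\sigma^{t+1}_p,\sigma_o)}(I)_a$ is a nonnegative-reach-weighted combination of the quantities $\sigma^{t+1}_p(J)\cdot(\text{child values under }\sigma^{t+1}_p)$, each of which is $\ge\sigma^{t+1}_p(J)\cdot(\text{child values under }\sigma^t_p)$ by the inductive hypothesis (child values only increase, and $\sigma^{t+1}_p(J)\ge\boldsymbol{0}$), which is in turn $\ge\sigma^t_p(J)\cdot(\text{child values under }\sigma^t_p)$ by Theorem~\ref{thm:rm-improvement} applied at $J$. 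Here it is essential that the opponent-and-chance reach weights $\pi^{\sigma_o}_{-p}$ are identical on both sides, since $\sigma_o$ is held fixed; perfect recall guarantees that these reach probabilities factor cleanly through the information-set tree and that "the next Player $p$ information set" is well-defined for each terminal.

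The main obstacle I anticipate is purely bookkeeping: setting up the decomposition of $v^{(\sigma_p,\sigma_o)}(I)_a$ as an explicit nonnegative combination of the inner products $\boldsymbol{\sigma}_p(J)\cdot\boldsymbol{v}^{(\sigma_p,\sigma_o)}(J)$ over the "child" Player $p$ information sets $J$, and verifying that the weights (products of $\sigma_o$ probabilities and chance probabilities along the intervening path, summed over histories in $I$) do not themselves depend on $\sigma_p$. Perfect recall is what makes this clean — it ensures every $h\in I$ passes through the same sequence of Player $p$ information sets, so "the next Player $p$ information set reached after $ha$" is consistent across $I$ — but stating the decomposition carefully, including the terminal-history terms that bottom out with no further $p$ action, is the fiddly part. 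Once that identity is in hand, the inequality follows termwise from Theorem~\ref{thm:rm-improvement} and the induction hypothesis, with no further estimates needed.
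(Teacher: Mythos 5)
Your proposal is correct and follows essentially the same route as the paper's proof: induction on the depth of Player $p$'s information sets from the bottom of the tree, a base case where the continuation below $ha$ is independent of $\sigma_p$, and an inductive step that partitions the terminals under $(I,a)$ into those passing through each child information set $J$ (handled by the chain $\boldsymbol{\sigma}^{t+1}_p(J)\cdot\boldsymbol{v}^{(\sigma^{t+1}_p,\sigma_o)}(J)\ge\boldsymbol{\sigma}^{t+1}_p(J)\cdot\boldsymbol{v}^{(\sigma^t_p,\sigma_o)}(J)\ge\boldsymbol{\sigma}^t_p(J)\cdot\boldsymbol{v}^{(\sigma^t_p,\sigma_o)}(J)$ via the inductive hypothesis and Theorem~\ref{thm:rm-improvement}) plus those reaching no further Player $p$ information set, with perfect recall justifying the partition and the $\sigma_p$-independence of the reach weights. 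The "bookkeeping" you flag is exactly what the paper spends its space on (the definitions of $Z(I,a)$, $C(I,a)$, $T(I,a)$, and the partition $\mathcal{P}$), and your outline fills it in correctly.
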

\begin{proof}
  We will use some additional terminology.  Let the terminal states
  reached from $I$ by action $a\in{}A(I)$ be
  \begin{align}
    Z(I,a) & \defeq \bigcup_{h \in I} Z(ha) \label{eq:infoset-terminals}
  \end{align}
  and for any descendant state of $I$, we will call
  the ancestor $h$ in $I$
  \begin{align}
    h^{I}(j) & \defeq h \in I \mbox{ s.t. } h \sqsubseteq j \label{eq:infoset-parent-history}
  \end{align}
  Let $D(I,a)$ be the set of information sets which are descendants of $I$ given action $a\in{}A(I)$, and let  $C(I,a)$ be the set of immediate children:
  \begin{align}
    D(I,a) & \defeq \{ J \in \mathcal{I}_{p(I)} \mid \exists h \in I, j \in J \mbox{ s.t. } ha \sqsubseteq j \} \nonumber \\
    C(I,a) & \defeq D(I,a) \setminus \bigcup_{J \in C(I,a), b \in A(J)} D(J,b) \label{eq:infoset-child}
  \end{align}
  Note that by perfect recall, for $J\in{}C(I,a)$, $\exists{}h\in{}I$
  such that $ha\sqsubseteq{}j$ for all $j \in J$: if one state in $J$
  is reached from $I$ by action $a$, all states in $J$ are reached
  from $I$ by action $a$. Let the distance of an information set from
  the end of the game be
  \begin{align}
    d(I) & \defeq \left\{ \begin{array}{ll}
      \max_{a \in A(I), J \in C(I,a)}(d(J) + 1) & \mbox{if $\exists a$ s.t. $C(I,a) \ne \emptyset$} \\
      0 & \mbox{if $\forall a,~C(I,a) = \emptyset$}
    \end{array} \right. \label{eq:infoset-depth}
  \end{align}
  Using this new terminology, we can re-write
  \begin{align}
    v^{\boldsymbol{\sigma}}_p(I)_a & = \sum_{h \in I} \sum_{z \in Z(h)} \pi^{\boldsymbol{\sigma}}_{-p}(z) \pi^{\boldsymbol{\sigma}}_p(z \mid ha) u_p(z) & \mbox{by Eq.~\ref{eq:cfv}} \nonumber \\
    & = \sum_{z \in Z(I,a)} \pi^{\boldsymbol{\sigma}}_{-p}(z) \pi^{\boldsymbol{\sigma}}_p(z \mid h^I(z)a) u_p(z) & \mbox{by Eqs.~\ref{eq:infoset-terminals},~\ref{eq:infoset-parent-history}} \label{eq:cfv-rewrite}
  \end{align}
  We will now show that $\forall{}i\ge{}0$
  \begin{align}
    & \forall I \in \mathcal{I}_p \mbox{ s.t. } d(I) \le i,~ \forall a \in A(I),~ v^{(\sigma^{t+1},\sigma_o)}(I)_a \ge v^{(\sigma^t_p,\sigma_o)}(I)_a \label{eq:cfr-infoset-improvement-hypothesis}
  \end{align}
  For the base case $i=0$, consider any $I\in\mathcal{I}_p$ such that
  $d(I)=0$. Given these assumptions,
  \begin{align}
    & \forall a \in A(I),~ C(I,a) = \emptyset & \mbox{by Eqs.~\ref{eq:infoset-child},~\ref{eq:infoset-depth}} \nonumber \\
    & \forall \boldsymbol{\sigma},~ \forall a \in A(I),~ \forall z \in Z(I,a),~ \pi^{\boldsymbol{\sigma}}_p(z \mid h^I(z)a) = 1 & \mbox{by Eq.~\ref{eq:pi-p-h-j}} \label{eq:cfr-infoset-improvement-no-child}
  \end{align}
  Now consider
  $v^{(\sigma^{t+1}_p,\sigma_o)}_p(I)_a$
  \begin{align}
    & = \sum_{z \in Z(I,a)} \pi^{(\sigma^{t+1}_p,\sigma_o)}_{-p}(z) \pi^{(\sigma^{t+1}_p,\sigma_o)}_p(z \mid h^I(z)a) u_p(z) & \mbox{by Eq.~\ref{eq:cfv-rewrite}} \nonumber \\
    & = \sum_{z \in Z(I,a)} \pi^{(\sigma^t_p,\sigma_o)}_{-p}(z) \pi^{(\sigma^{t+1}_p,\sigma_o)}_p(z \mid h^I(z)a) u_p(z) & \mbox{by Eq.~\ref{eq:pi-p-h}} \nonumber \\
    & = \sum_{z \in Z(I,a)} \pi^{(\sigma^t_p,\sigma_o)}_{-p}(z) \pi^{(\sigma^{t}_p,\sigma_o)}_p(z \mid h^I(z)a) u_p(z) & \mbox{by Eq.~\ref{eq:cfr-infoset-improvement-no-child}} \nonumber \\
    & = v^{(\sigma^t_p,\sigma_o)}(I)_a & \mbox{by Eq.~\ref{eq:cfv-rewrite}} \nonumber
  \end{align}
  Assume the inductive hypothesis,
  Equation~\ref{eq:cfr-infoset-improvement-hypothesis}, holds for some
  $i\ge{}0$. If $\forall{}I\in\mathcal{I}_p$, $d(I)\le{i}$,
  Equation~\ref{eq:cfr-infoset-improvement-hypothesis} trivially holds
  for $i+1$. Otherwise, consider any $I\in\mathcal{I}_p$ such that
  $d(I)=i+1$. Let $T(I,a)$ be the (possibly empty) set of terminal
  histories in $Z(I,a)$ that do not pass through another information
  set in $\mathcal{I}_{p(I)}$.
  \begin{align}
    T(I,a) & \defeq Z(I,a) \setminus \bigcup_{J \in C(I,a), b \in A(J)} Z(J,b) \label{eq:cfr-infoset-improvement-childless}
  \end{align}
  Because we require players to have perfect recall, terminal
  histories which pass through different child information sets are
  disjoint sets.
  \begin{align}
    Z(J,b) \cap Z(J',b') = \emptyset \iff J = J', b = b' \nonumber
  \end{align}
  Therefore, we can construct a partition $\mathcal{P}$ of $Z(I,a)$
  from these disjoint sets and the terminal histories $T(I,a)$ which do
  not pass through any child information set.
  \begin{align}
    \mathcal{P} \defeq & \{Z(J,b) \mid J \in C(I,a), b \in A(J)\} \cup \{T(I,a)\} & \label{eq:cfr-infoset-improvement-partition}
  \end{align}
  Note that by the induction assumption, because $d(I)=i+1$
  \begin{align}
    & \forall J \in C(I,a),~ d(J) \le i & \mbox{by Eqs.~\ref{eq:infoset-child},~\ref{eq:infoset-depth}} \nonumber \\
    & \forall J \in C(I,a), b \in A(J),~ v^{(\sigma^{t+1},\sigma_o)}(J)_b \ge v^{(\sigma^t_p,\sigma_o)}(J)_b \label{eq:cfr-infoset-improvement-decrease-depth}
  \end{align}
  Given this, we have $v^{(\sigma^{t+1}_p,\sigma_o)}(I)_a$
  \begin{align}
    = & \sum_{z \in Z(I,a)} \pi^{(\sigma^{t+1}_p,\sigma_o)}_{-p}(z) \pi^{(\sigma^{t+1}_p,\sigma_o)}_p(z \mid h^I(z)a) u_p(z) & \mbox{by Eq.~\ref{eq:cfv-rewrite}} \nonumber \\
    = & \sum_{z \in Z(I,a)} \pi^{(\sigma^t_p,\sigma_o)}_{-p}(z) \pi^{(\sigma^{t+1}_p,\sigma_o)}_p(z \mid h^I(z)a) u_p(z) & \mbox{by Eq.~\ref{eq:pi-p-h}} \nonumber \\
    = & \sum_{J \in C(I,a)} \sum_{b \in A(J)} \sum_{z \in Z(J,b)} \pi^{(\sigma^t_p,\sigma_o)}_{-p}(z) \pi^{(\sigma^{t+1}_p,\sigma_o)}_p(z \mid h^I(z)a) u_p(z) \nonumber \\
    & + \sum_{z \in T(I,a)} \pi^{(\sigma^t_p,\sigma_o)}_{-p}(z) \pi^{(\sigma^{t+1}_p,\sigma_o)}_p(z \mid h^I(z)a) u_p(z) & \mbox{by Eq.~\ref{eq:cfr-infoset-improvement-partition}} \nonumber \\
    = & \sum_{J \in C(I,a)} \sum_{b \in A(J)} \sum_{z \in Z(J,b)} \pi^{(\sigma^t_p,\sigma_o)}_{-p}(z) \pi^{(\sigma^{t+1}_p,\sigma_o)}_p(z \mid h^I(z)a) u_p(z) \nonumber \\
    & + \sum_{z \in T(I,a)} \pi^{(\sigma^t_p,\sigma_o)}_{-p}(z) \pi^{(\sigma^t_p,\sigma_o)}_p(z \mid h^I(z)a) u_p(z) & \mbox{by Eqs.~\ref{eq:pi-p-h-j},~\ref{eq:cfr-infoset-improvement-childless}} \label{eq:cfr-infoset-improvement-partition-sum}
  \end{align}
  Looking at the terms inside $\sum_J$ we have
  \begin{align}
    & \sum_{b \in A(J)} \sum_{z \in Z(J,b)} \pi^{(\sigma^t_p,\sigma_o)}_{-p}(z) \pi^{(\sigma^{t+1}_p,\sigma_o)}_p(z \mid h^I(z)a) u_p(z) \nonumber \\
    & = \sum_{b \in A(J)} \sum_{z \in Z(J,b)} \pi^{(\sigma^t_p,\sigma_o)}_{-p}(z) \sigma^{t+1}_{p}(J)_b \pi^{(\sigma^{t+1}_p,\sigma_o)}_p(z \mid h^J(z)b) u_p(z) & \mbox{by Eqs.~\ref{eq:pi-p-h-j},~\ref{eq:infoset-child}} \nonumber \\
    & = \sum_{b \in A(J)} \sigma^{t+1}_{p}(J)_b v^{(\sigma^{t+1}_p,\sigma_o)}(J)_b & \mbox{by Eq.~\ref{eq:cfv-rewrite}} \nonumber \\
    & = \boldsymbol{\sigma}^{t+1}_{p}(J) \cdot \boldsymbol{v}^{(\sigma^{t+1}_p,\sigma_o)}(J) \nonumber \\
    & \ge \boldsymbol{\sigma}^{t+1}_{p}(J) \cdot \boldsymbol{v}^{(\sigma^t_p,\sigma_o)}(J) & \mbox{by Eq.~\ref{eq:cfr-infoset-improvement-decrease-depth}} \nonumber \\
    & \ge \boldsymbol{\sigma}^t_{p}(J) \cdot \boldsymbol{v}^{(\sigma^t_p,\sigma_o)}(J) & \mbox{by Theorem~\ref{thm:rm-improvement}} \nonumber \\
    & = \sum_{b \in A(J)} \sum_{z \in Z(J,b)} \pi^{(\sigma^t_p,\sigma_o)}_{-p}(z) \pi^{(\sigma^t_p,\sigma_o)}_p(z \mid h^I(z)a) u_p(z) \nonumber
  \end{align}
  Substituting the terms back into Equation~\ref{eq:cfr-infoset-improvement-partition-sum}, we have $v^{(\sigma^{t+1}_p,\sigma_o)}(I)_a$
  \begin{align}
    \ge & \sum_{J \in C(I,a)} \sum_{b \in A(J)} \sum_{z \in Z(J,b)} \pi^{(\sigma^t_p,\sigma_o)}_{-p}(z) \pi^{(\sigma^t_p,\sigma_o)}_p(z \mid h^I(z)a) u_p(z) \nonumber \\
    & + \sum_{z \in T(I,a)} \pi^{(\sigma^t_p,\sigma_o)}_{-p}(z) \pi^{(\sigma^t_p,\sigma_o)}_p(z \mid h^I(z)a) u_p(z) & \nonumber \\
    = & \sum_{z \in Z(I,a)} \pi^{(\sigma^t_p,\sigma_o)}_{-p}(z) \pi^{(\sigma^t_p,\sigma_o)}_p(z \mid h^I(z)a) u_p(z) & \mbox{by Eq.~\ref{eq:cfr-infoset-improvement-partition}} \nonumber \\
    = & v^{(\sigma^t_p,\sigma_o)}(I)_a & \mbox{by Eq.~\ref{eq:cfv-rewrite}} \nonumber
  \end{align}
  Therefore Equation~\ref{eq:cfr-infoset-improvement-hypothesis} holds
  for $i+1$, and by induction holds for all $i$. In particular, it
  holds for $i=\max_{I\in\mathcal{I}_p}d(I)$, and applies to all
  $I\in\mathcal{I}_p$.
\end{proof}

\begin{theorem}
  \label{thm:cfr-improvement}
  Let $p$ be the player that is about to be updated in CFR or CFR$^+$
  at some time $t$. Let $\sigma^t_p$ be the current strategy for $p$,
  and $\sigma_o$ be the opponent strategy $\sigma^t_{-p}$ or
  $\sigma^{t+1}_{-p}$ used by the values defined in
  Equation~\ref{eq:cfr-value}. Then
  $u^{(\sigma^{t+1}_p,\sigma_o)}_p\ge{}u^{(\sigma^t_p,\sigma_o)}_p$.
\end{theorem}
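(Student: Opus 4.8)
The plan is to reduce the claim almost immediately to Lemma~\ref{thm:cfr-infoset-improvement} by exploiting the artificial root information set $I^\emptyset_p$. Recall from the definitions that for each player $p$ the game is assumed to begin with an information set $I^\emptyset_p$ containing a single state with a single action $a_0$ leading into the rest of the game, and that this device was introduced precisely so that $u^{\boldsymbol{\sigma}}_p = v^{\boldsymbol{\sigma}}(I^\emptyset_p)_{a_0}$ for every strategy profile $\boldsymbol{\sigma}$. Since $I^\emptyset_p$ is a decision point of Player $p$, it is a member of $\mathcal{I}_p$, so Lemma~\ref{thm:cfr-infoset-improvement} applies to it.

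Concretely, I would first check that the hypotheses of Lemma~\ref{thm:cfr-infoset-improvement} match those of this theorem verbatim: $p$ is the player about to be updated, $\sigma^t_p$ is the current strategy, and $\sigma_o$ is exactly the opponent strategy ($\sigma^t_{-p}$ or $\sigma^{t+1}_{-p}$) appearing in Equation~\ref{eq:cfr-value}. Then I would instantiate that lemma at $I = I^\emptyset_p$ and $a = a_0$, obtaining $v^{(\sigma^{t+1}_p,\sigma_o)}(I^\emptyset_p)_{a_0} \ge v^{(\sigma^t_p,\sigma_o)}(I^\emptyset_p)_{a_0}$. Rewriting the left-hand side as $u^{(\sigma^{t+1}_p,\sigma_o)}_p$ and the right-hand side as $u^{(\sigma^t_p,\sigma_o)}_p$ via the identity $u^{\boldsymbol{\sigma}}_p = v^{\boldsymbol{\sigma}}(I^\emptyset_p)_{a_0}$ applied to the profiles $(\sigma^{t+1}_p,\sigma_o)$ and $(\sigma^t_p,\sigma_o)$ gives exactly the desired inequality.

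I do not anticipate any real obstacle here: all of the substantive work was already carried out in Lemma~\ref{thm:cfr-infoset-improvement} and, beneath it, Theorem~\ref{thm:rm-improvement}, and this theorem is simply the specialization of the infoset-level statement to the root. The only point requiring any care is making sure the assumed root information set $I^\emptyset_p$ is genuinely in $\mathcal{I}_p$ so that the lemma is legitimately applicable, and that $\sigma_o$ is passed through unchanged; both hold by construction. Hence the proof is a short, corollary-style argument rather than anything that needs new ideas.
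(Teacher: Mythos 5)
Your proposal is correct and matches the paper's own proof exactly: the paper also derives the theorem as an immediate consequence of Lemma~\ref{thm:cfr-infoset-improvement} applied at the root information set, together with the identity $u^{\boldsymbol{\sigma}}_p=v^{\boldsymbol{\sigma}}(I^\emptyset_p)_{a_0}$. Your additional care in checking that $I^\emptyset_p\in\mathcal{I}_p$ and that the hypotheses line up is sound but adds nothing beyond the paper's one-line argument.
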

\begin{proof}
  This immediately follows from
  Lemma~\ref{thm:cfr-infoset-improvement} and
  $u^{\boldsymbol{\sigma}}_p=v^{\boldsymbol{\sigma}}(I^\emptyset_p)_{a_0}$.
\end{proof}

As a corollary of Theorems~\ref{thm:folk-theorem}
and~\ref{thm:cfr-improvement}, when using alternating updates with
either CFR or CFR$^+$, the average strategy
$(\bar{\sigma}^{[1,t]}_1,\bar{\sigma}^{[0,t-1]}_2)$ has
$\mathcal{O}(\sqrt{t})$ exploitability. From the original papers, both
algorithms have an $\mathcal{O}(\sqrt{t})$ regret bound, and the
trailing sum in Theorem~\ref{thm:folk-theorem} is non-negative by
Theorem~\ref{thm:cfr-improvement}.  However, this only applies to a
uniform average, so we need yet another theorem to bound the
exploitability of the CFR$^+$ weighted average.

\begin{theorem}
  \label{thm:cfr-plus}
  Let $\boldsymbol{\sigma}^t$ be the CFR$^+$ strategy profile at some
  time $t$, using alternating updates so that Player 1 regret-like
  values are updated using $\boldsymbol{v}^{(\sigma^t_1,\sigma^t_2)}$
  and Player 2 regrets are updated using
  $\boldsymbol{v}^{(\sigma^{t+1}_1,\sigma^t_2)}$. Let
  $l=\max_{y,z\in{}Z}(u_1(y)-u_2(z))$ be the bound on terminal
  utilities. Then the exploitability of the weighted average strategy
  $(\frac{2}{t^2+t}\sum_{i=1}^{t}i\sigma^i_1,\frac{2}{t^2+t}\sum_{i=0}^{t-1}(i+1)\sigma^i_2)$
  is bounded by $2|\mathcal{I}|l\sqrt{k/t}$,
  where $k\defeq\max_{I}|A(I)|$.
\end{theorem}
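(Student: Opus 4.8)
The plan is to prove a weighted analogue of Theorem~\ref{thm:folk-theorem} tailored to the CFR$^+$ output weights, bound the resulting weighted regrets using the regret-matching$^+$ analysis, and discard the leftover strategy-improvement term via Theorem~\ref{thm:cfr-improvement}, mirroring the uniform-average argument already sketched above.

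First I would re-run the calculation in the proof of Theorem~\ref{thm:folk-theorem} with iteration $i$ carrying weight $i+1$ rather than $1$, for $i=0,\dots,t-1$; the normaliser becomes $\sum_{i=0}^{t-1}(i+1)=(t^2+t)/2$, matching the CFR$^+$ weights. The bookkeeping is that weighting Player~1's \emph{updated} strategy $\sigma^{i+1}_1$ by $i+1$ reproduces $\frac{2}{t^2+t}\sum_{i=0}^{t-1}(i+1)\sigma^{i+1}_1=\frac{2}{t^2+t}\sum_{j=1}^{t}j\sigma^j_1$, the first output component, while weighting Player~2's strategy $\sigma^i_2$ by $i+1$ gives the second. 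With $\epsilon^w_1=\max_{\sigma^*_1}\frac{2}{t^2+t}\sum_{i=0}^{t-1}(i+1)(u^{(\sigma^*_1,\sigma^i_2)}_1-u^{(\sigma^i_1,\sigma^i_2)}_1)$ and $\epsilon^w_2=\max_{\sigma^*_2}\frac{2}{t^2+t}\sum_{i=0}^{t-1}(i+1)(u^{(\sigma^{i+1}_1,\sigma^*_2)}_2-u^{(\sigma^{i+1}_1,\sigma^i_2)}_2)$, using multilinearity of $u$ to collapse the opponent averages into $\bar\sigma^w_2$ and $\bar\sigma^w_1$ and then adding the two lines and applying the zero-sum identity $u^{(\sigma^{i+1}_1,\sigma^i_2)}_2=-u^{(\sigma^{i+1}_1,\sigma^i_2)}_1$ gives the weighted-average exploitability as $\epsilon^w_1+\epsilon^w_2-\frac{2}{t^2+t}\sum_{i=0}^{t-1}(i+1)(u^{(\sigma^{i+1}_1,\sigma^i_2)}_1-u^{(\sigma^i_1,\sigma^i_2)}_1)$. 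Every term of the trailing sum has a non-negative weight and, by Theorem~\ref{thm:cfr-improvement} applied to Player~1's update against the fixed opponent $\sigma^i_2$, a non-negative increment, so the weighted-average exploitability is at most $\epsilon^w_1+\epsilon^w_2$.

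Second I would bound each $\epsilon^w_p$ by $2|\mathcal{I}_p|l\sqrt{k/t}$. The CFR regret decomposition of \cite{07nips-cfr} is linear in the per-iteration immediate counterfactual regrets, so the global weights $i+1$ pass through unchanged, giving $\frac{t^2+t}{2}\epsilon^w_p\le\sum_{I\in\mathcal{I}_p}\bigl(\max_{a\in A(I)}\sum_{i=0}^{t-1}(i+1)(v^i(I)_a-\boldsymbol{\sigma}^i_p(I)\cdot\boldsymbol{v}^i(I))\bigr)^+$. For a fixed information set the regret-matching$^+$ value satisfies $q^t(I)_a\ge\sum_{i=j}^{t-1}(v^i(I)_a-\boldsymbol{\sigma}^i_p(I)\cdot\boldsymbol{v}^i(I))$ for every $j$ (drop each $(\cdot)^+$, using $q^j(I)_a\ge0$), so summing over $j$ yields $\sum_{i=0}^{t-1}(i+1)(v^i(I)_a-\boldsymbol{\sigma}^i_p(I)\cdot\boldsymbol{v}^i(I))\le t\,q^t(I)_a\le t\|\boldsymbol{q}^t(I)\|$. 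The regret-matching$^+$ update is non-expansive, and since $\boldsymbol{\sigma}^i_p(I)$ is proportional to $\boldsymbol{q}^i(I)$ on its support we have $\boldsymbol{q}^i(I)\cdot(\boldsymbol{v}^i(I)-(\boldsymbol{\sigma}^i_p(I)\cdot\boldsymbol{v}^i(I))\boldsymbol{1})=0$, which telescopes to $\|\boldsymbol{q}^t(I)\|^2\le\sum_{i=0}^{t-1}\|\boldsymbol{v}^i(I)-(\boldsymbol{\sigma}^i_p(I)\cdot\boldsymbol{v}^i(I))\boldsymbol{1}\|^2\le tkl^2$, using $|A(I)|\le k$ and that each entry of $\boldsymbol{v}^i(I)$ lies between $\pi^{\boldsymbol{\sigma}^i}_{-p}(I)$ times the smallest and largest terminal utilities with $\pi^{\boldsymbol{\sigma}^i}_{-p}(I)\le1$ by perfect recall, so the range is at most $l$. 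Hence the per-set quantity is at most $lt^{3/2}\sqrt{k}$ and $\epsilon^w_p\le\frac{2}{t^2+t}|\mathcal{I}_p|lt^{3/2}\sqrt{k}\le2|\mathcal{I}_p|l\sqrt{k/t}$; summing over players gives $\expl\le\epsilon^w_1+\epsilon^w_2\le2|\mathcal{I}|l\sqrt{k/t}$. (This weighted regret bound is essentially the one in the original CFR$^+$ analysis~\cite{Tammelin14,15ijcai-cfrplus}, which was never the faulty step and may simply be cited.)

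The part I expect to be delicate is the weighted folk-theorem step: keeping the index offsets straight so that weighting the \emph{result} $\sigma^{i+1}_1$ by $i+1$ really produces the stated output $\frac{2}{t^2+t}\sum_{j=1}^{t}j\sigma^j_1$, and checking that after the zero-sum cancellation the leftover sum is precisely a non-negatively weighted sum of the Player~1 improvements $u^{(\sigma^{i+1}_1,\sigma^i_2)}_1-u^{(\sigma^i_1,\sigma^i_2)}_1$, so that Theorem~\ref{thm:cfr-improvement} applies term by term. The per-information-set regret-matching$^+$ bound and the weighted CFR decomposition are routine once the weights are seen to factor out globally.
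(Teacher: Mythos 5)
Your proposal is correct and follows essentially the same route as the paper: a weighted version of Theorem~\ref{thm:folk-theorem} whose trailing sum is a non-negatively weighted sum of Player 1 improvements (discarded via Theorem~\ref{thm:cfr-improvement}), combined with the weighted regret-matching$^+$ bound from the original CFR$^+$ analysis. The only difference is presentational -- the paper realises the weights $i+1$ by expanding the sequence into $i+1$ copies of each profile so it can cite CFR$^+$ Lemma 4 and CFR Theorem 3 verbatim, whereas you carry the weights explicitly and re-derive the $t\,q^t_a$ bound, which is the same computation.
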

\begin{proof}
  Consider two expanded sequences $S^1$ and $S^2$ of strategy profiles
  where the original strategy profile $\boldsymbol{\sigma}^t$ occurs
  $t+1$ times
  \begin{align}
    S^1 & \defeq \begin{array}{cccc}
      \underbrace{(\sigma^0_1, \sigma^0_2),} & \underbrace{(\sigma^1_1, \sigma^1_2), (\sigma^1_1, \sigma^1_2),} & ..., & \underbrace{(\sigma^{t-1}_1, \sigma^{t-1}_2), ..., (\sigma^{t-1}_1, \sigma^{t-1}_2)} \\
      \mbox{1 copy} & \mbox{2 copies} & & \mbox{t copies}
      \end{array} \nonumber \\
    S^2 & \defeq \begin{array}{cccc}
      \underbrace{(\sigma^1_1, \sigma^0_2),} & \underbrace{(\sigma^2_1, \sigma^1_2), (\sigma^2_1, \sigma^1_2),} & ..., & \underbrace{(\sigma^t_1, \sigma^{t-1}_2), ..., (\sigma^t_1, \sigma^{t-1}_2)} \\
      \mbox{1 copy} & \mbox{2 copies} & & \mbox{t copies}
      \end{array} \nonumber
  \end{align}
  Then with respect to $S^p$, the total Player $p$ regret for any
  information set $I$ and action $a$ is
  \begin{align}
    r^{\frac{t^2+t}{2}}_p(I)_a & \le tl\sqrt{kt} & \mbox{by CFR$^+$
      Lemma 4~\cite{15ijcai-cfrplus}} \nonumber
  \end{align}
  and the average Player $p$ regret is
  \begin{align}
    r^{\frac{t^2+t}{2}}_p & \le \frac{2}{t^2+t} \sum_{I \in \mathcal{I}_p} \max_a r^{\frac{t^2+t}{2}}(I)_a & \mbox{by CFR Theorem 3~\cite{07nips-cfr}} \nonumber \\
    & \le \frac{2}{t^2+t} \sum_{I \in \mathcal{I}_p}tl\sqrt{kt} \nonumber \\
  & \le 2|\mathcal{I}_p|l\sqrt{k/t} \label{eq:cfr-plus-regret-bound}
  \end{align}
  Because we have two sequences of profiles, we can not directly use
  Theorem~\ref{thm:folk-theorem}, but we can follow the same form as
  that proof to get
  \begin{align}
    & r^{\frac{t^2+t}{2}}_1 + r^{\frac{t^2+t}{2}}_2 \nonumber \\
    & = \max_{\sigma^*_1}\sum_{i=0}^{\frac{t^2+t}{2}-1} \biggl( u^{(\sigma^*_1,S^1_{i,2})}_1 - u^{\boldsymbol{S}^1_i}_1 \biggr) \frac{2}{t^2+t} + \max_{\sigma^*_2}\sum_{i=0}^{\frac{t^2+t}{2}-1} \biggl( u^{(\sigma^*_2,S^2_{i,1})}_2 - u^{\boldsymbol{S}^2_i}_2 \biggr) \frac{2}{t^2+t} \nonumber \\
    & = \max_{\sigma^*_1} u^{\bigl(\sigma^*_1,\bar{S^1}^{[0,\frac{t^2+t}{2}-1]}_2\bigr)}_1 + \max_{\sigma^*_2} u^{\bigl(\bar{S^2}^{[0,\frac{t^2+t}{2}-1]}_1,\sigma^*_2\bigr)}_2 - \sum_{i=0}^{\frac{t^2+t}{2}}\biggl(u^{\boldsymbol{S}^1_i}_1 + u^{\boldsymbol{S}^2_i}_2\biggr)\frac{2}{t^t+t} \nonumber \\
    & = \max_{\sigma^*_1} u^{\bigl(\sigma^*_1,\frac{2}{t^2+t}\sum_{i=0}^{t-1}(i+1)\sigma^i_2\bigr)}_1 + \max_{\sigma^*_2} u^{\bigl(\frac{2}{t^2+t}\sum_{i=1}^{t}i\sigma^i_1,\sigma^*_2\bigr)}_2 - \sum_{i=0}^{t-1}\frac{2(i+1)}{t^t+t}\biggl(u^{(\sigma^i_1,\sigma^i_2)}_1 - u^{(\sigma^{i+1}_1,\sigma^i_2)}_1\biggr) \nonumber \\
    & = \expl{\biggl(\frac{2}{t^2+t}\sum_{i=1}^{t}i\sigma^i_1,\frac{2}{t^2+t}\sum_{i=0}^{t-1}(i+1)\sigma^i_2\biggr)} - \sum_{i=0}^{t-1}\frac{2(i+1)}{t^t+t}\biggl(u^{(\sigma^i_1,\sigma^i_2)}_1 - u^{(\sigma^{i+1}_1,\sigma^i_2)}_1\biggr) \nonumber
  \end{align}
  Given Equation~\ref{eq:cfr-plus-regret-bound}, we have
  $\expl{\bigl(\frac{2}{t^2+t}\sum_{i=1}^{t}i\sigma^i_1,\frac{2}{t^2+t}\sum_{i=0}^{t-1}(i+1)\sigma^i_2\bigr)}$
  \begin{align}
    & \le 2|\mathcal{I}|_1|l\sqrt{k/t} + 2|\mathcal{I}|_2|l\sqrt{k/t} + \frac{2}{t^t+t}\sum_{i=0}^{t-1}\biggl(i+1\biggr)\biggl(u^{(\sigma^i_1,\sigma^i_2)}_1 - u^{(\sigma^{i+1}_1,\sigma^i_2)}_1\biggr) \nonumber \\
    & \le 2|\mathcal{I}|_1|l\sqrt{k/t} + 2|\mathcal{I}|_2|l\sqrt{k/t} & \mbox{by Theorem~\ref{thm:cfr-improvement}} \nonumber \\
    & = 2|\mathcal{I}|l\sqrt{k/t} \nonumber
  \end{align}
\end{proof}

\section{Conclusions}
\label{sec:conclusion}

The original CFR$^+$ convergence proof makes unsupported use of the
folk theorem linking regret to exploitability. We re-make the link
between regret and exploitability for alternating updates, and provide
a corrected CFR$^+$ convergence proof that recovers the original
exploitability bound. The proof uses a specific property of CFR and
CFR$^+$, where for any single player update, both algorithms are
guaranteed to never generate a new strategy which is worse than the
current strategy.

With a corrected proof, we once again have a theoretical guarantee of
correctness to fall back on, and can safely use CFR$^+$ with
alternating updates, in search of its strong empirical performance
without worrying that it might be worse than CFR.

The alternating update analogue of the folk theorem also provides some
theoretical motivation for the empirically observed benefit of using
alternating updates. Exploitability is now bounded by the regret minus
the average improvement in expected values. While we proved that the
improvement is guaranteed to be non-negative for CFR and CFR$^+$, we
would generally expect non-zero improvement on average, with a
corresponding reduction in the bound on exploitability.

\bibliography{burch19a}
\bibliographystyle{theapa}

\end{document}